\providecommand{\U}[1]{\protect\rule{.1in}{.1in}}
\newtheorem{theorem}{Theorem}
\newtheorem{corollary}{Corollary}
\newtheorem{definition}{Definition}
\newtheorem{lemma}{Lemma}
\newtheorem{proposition}{Proposition}
\newtheorem{remark}{Remark}
\newtheorem{example}{Example}
\let\oldexample\example
\renewcommand{\example}{\oldexample\normalfont}
\newenvironment{proof}[1][Proof]{\noindent \textbf{#1:} }{\hfill \rule{0.5em}{0.5em}}
\begin{document}

\title{Compensation-based risk-sharing}
\author{Jan Dhaene\thanks{jan.dhaene@kuleuven.be, KU Leuven, Belgium} \and Atibhav Chaudhry\thanks{atibhav.chaudhry@kuleuven.be, University of Melbourne, Australia and KU Leuven, Belgium} \and Ka Chun Cheung\thanks{kccg@hku.hk, University of Hong Kong, Hong Kong} \and Austin Riis-Due\thanks{austincarter.riis-due@uwaterloo.ca, University of Waterloo, Canada}}

\maketitle
\begin{abstract}
This paper studies the mathematical problem of allocating payouts (compensations) in an endowment contingency fund using a risk-sharing rule that satisfies full allocation. Besides the participants, an administrator manages the fund by collecting ex-ante contributions to establish the fund and distributing ex-post payouts to members. Two types of administrators are considered. An `active’ administrator both invests in the fund and receives the payout of the fund when no participant receives a payout. A `passive’ administrator performs only administrative tasks and neither invests in nor receives a payout from the fund. We analyze the actuarial fairness of both compensation-based risk-sharing schemes and provide general conditions under which fairness is achieved. The results extend earlier work by Denuit and Robert (2025) and Dhaene and Milevsky (2024), who focused on payouts based on Bernoulli distributions, by allowing for general non-negative loss distributions. 
\end{abstract}

\paragraph{Keywords:} {decentralized insurance; centralized insurance; P2P insurance; compensation-based risk-sharing; contribution-based risk-sharing; tontines.}

\section{Centralized vs. decentralized risk-pooling}

Consider a group of $n$ individuals observed at time $0$, each of them exposed
to a non-negative random loss at time $1$.\ This loss can be related to a
well-defined peril (e.g., hospitalization-related or critical illness-related
expenses), or it can be expressed as a deterministic claim payment contingent
on the occurrence of a well-defined event (e.g., a deterministic payment
contingent on death or survival).

In the \textit{classical insurance} approach, individuals involve an
insurer and each of them buys an insurance contract at time $0$, which
entitles them to their respective observed losses at time $1$. Every
policyholder is compensated ex-post (at time $1$) for his experienced loss.
In return for this coverage, the insurer charges each policyholder an ex-ante insurance premium (at time $0$). In such an insurance coverage, the aggregate risk
(randomness) of the insurance portfolio is taken over by the insurer. This
risk transfer is possible if the pool consists of a sufficient number of
mutually independent and homogeneous risks, with premiums being calculated in
a conservative way.\ In addition, the insurer sets up solvency capital for the
case that the collected premiums turn out to be insufficient to cover the
guaranteed claims.\ Premiums and solvency capital are chosen so that the
probability of the event that the sum of all accumulated premiums and solvency
capital exceeds the aggregate claims of the insurance portfolio is
sufficiently large (e.g., 99.5\%). Classical insurance as described above is a
form of \textit{centralized risk transfer}, meaning that it is a risk-transfer
mechanism in which individual losses faced by policyholders are transferred
to a central insurer who guarantees that losses will be paid. In exchange, policyholders pay a risk premium and compensate the insurer’s shareholders by providing a return on the solvency capital they set aside to maintain the insurer’s guarantee.

Instead of a transfer of the aggregate risk of the pool to a central insurer
(guarantor), individuals can opt for a so-called \textit{decentralized
risk-sharing} approach, where individuals do not transfer the aggregate losses
to a guarantor but keep the aggregate risk within the pool, without
generating or creating any insolvency risk. Examples of such approaches can be found in Abdikerimova and Feng (2022) and the references
therein. One way to achieve this
goal is that the premiums at time $0$ of the classical centralized approach are
replaced by time $1$ contributions: Each participant in the risk-sharing pool
is fully compensated  for his loss at time $1$, but in return he pays a
contribution to the pool at time $1$.\ These contributions are time $1$ measurable random
variables, chosen at time $0$.\ The risk-sharing scheme is set up
such that the sum of all contributions paid at time $1$ by the participants is
exactly equal to the sum of all losses covered by the pool.\ This constraint
is called the full allocation condition.\ In other words, participants
contribute at time $1$, by sharing total losses once they have
been observed. Following Dhaene and Milevsky (2024), we call such a
decentralized approach \textit{contribution-based risk-sharing}.\ 

A simple example of such a risk-sharing approach is the uniform risk-sharing
rule, where each participant's contribution is set equal to the observed
aggregate losses of the pool, divided by the number of participants.\ Other
examples of contribution-based risk-sharing are the conditional-mean risk
sharing scheme, introduced in the actuarial literature in Denuit and Dhaene
(2012), and the quantile risk-sharing scheme introduced in Denuit, Dhaene and
Robert (2022).\ The properties of contribution-based
risk-sharing rules have been investigated in detail in Denuit, Dhaene and
Robert (2022) and Denuit, Dhaene, Ghossoub and Robert (2025), among others. An
axiomatic characterization of the conditional mean risk-sharing rule is given
in Jiao, Kou, Liu and Wang (2023), while an axiomatic characterization of the
quantile risk-sharing rule is considered in Dhaene, Cheung, Robert and Denuit
(2025). Axiomatic characterizations of some simple risk-sharing rules,
including the uniform rule, are presented in Dhaene, Kazzi and Valdez (2025).\ 

Decentralized risk-sharing can also be constructed in another way.\ Indeed,
suppose again that all participants are exposed to a random non-negative loss at time $1$.\ Each of them invests an amount at time $0$ to set up a so-called `endowment
contingency fund'. At time $1$, the total fund value
is shared among all participants. The relative part that each participant will
receive is a time $1$ observable random variable,\ determined at time $0$ as a
well-defined function of the claims and eventually also of other
information that will be observable at time $1$.\ Solvency is guaranteed by the full allocation condition which states
that the sum of all payments to the participants at time $1$ is equal to the fund
value at time $1$.\ The aim of an `endowment contingency fund' is to provide
participants with a cheaper and effective protection, compared to commercial
insurance.\ The term `endowment' indicates that it is an investment portfolio
with initial capital deriving from cash inflows, whereas the term
`contingency' means that the payments out of this fund are contingent on the
realization of certain random events.\ For each individual participant, the
coverage ratio (i.e., compensation over claim) depends on the initial
investments and the observed losses of all participants. A participant may
not receive his observed loss, i.e., he might not be fully compensated for his
loss, or even receive more than his loss.\ In this setting, the time $1$
payments can be considered as a kind of compensation for the occurred
losses.\ Therefore, Dhaene and Milevsky (2024) call the time $1$ payments from
the fund to the participants the compensations, and they baptize this
decentralized risk-sharing approach \textit{compensation-based risk-sharing}.

As a simple example of the compensation-based approach, consider the
risk-sharing scheme where at time $1$, each
participant receives a compensation which is proportional to his observed
loss.\ The proportionality factor, which is observable at time $1$, is assumed to be equal for all participants and follows from the full allocation condition.\ 

Special cases of compensation-based risk-sharing schemes have been
investigated in several actuarial papers, including Denuit and Robert (2025), Dhaene and Milevsky (2024) and Bernard, Feliciangeli and Vanduffel (2025), where only two-point distributed losses (indicator random variables) have
been considered.\ In the current paper, we generalize the approach set up in these papers to include general losses and further explore properties of general compensation-based risk-sharing schemes. The only paper of which we are aware that follows our general approach is Denuit and Robert (2026). Our generalization broadens the applicability of such schemes to a wider range of insurance types with random claim severity, for example, homeowners and automobile insurance. By presenting some simple examples with closed-form solutions, we aim to make the underlying principles more transparent.

We analyze the roles of an active and a passive administrator in such schemes, thereby unifying within a single setting the active-administrator setting of Dhaene and Milevsky (2024) and the passive-administrator setting of Denuit and Robert (2025). The active administrator is a simple capacity provider to the pool. Capital markets often exhibit a desire to participate in insurance risk as it is uncorrelated with other financial assets (Canabarro, Finkemeier, Anderson, Bendimerad, 2000). For funds with few participants, the active administrator plays a critical role for actuarial fairness and preventing the fund from becoming non-viable. In practice, any investor or (re-)insurer may participate as an active administrator. On the other hand, the passive administrator performs only administrative tasks and neither invests in nor receives a payout from the fund.

We present a simple utility-based comparison of compensation-based risk-sharing, classical insurance and self-insurance. For a more comprehensive analysis using utility theory, we refer the interested reader to Bernard, Feliciangeli and Vanduffel (2025), as well as to the ongoing work of Chen, Cheung, Dhaene, and Yam (2026), who develop a welfare-optimal framework for compensation-based risk-sharing in tontine funds with heterogeneous investors.

All random variables considered in this paper are defined on the probability space $\left(  \Omega,\mathcal{F}%
,\mathbb{P}\right)  $. The set of all non-negative random variables on $\left(
\Omega,\mathcal{F},\mathbb{P}\right)  $ is denoted by $L_{+}^{0}$. Throughout
this paper, the term `positive' is used for `strictly larger than zero'. The
set of non-negative real numbers is denoted by $%
\mathbb{R}
^{+}$.

\section{Compensation-based risk-sharing with an active administrator}

In this paper, we introduce and investigate a general type of fully-funded
risk-sharing (further abbreviated as RS) mechanisms, where $n$ participants
decide to mutually invest and set up a so-called `contingency endowment
fund'.\ At the beginning of the investment period, each participant $i$ makes
an initial (non-negative) investment $\pi_{i}$ in the fund. Throughout this paper, we always implicitly assume that at least one of these investments is positive, which means that
for at least one participant $i$, $i=1,2,\ldots,n$, one has that $\pi_{i}>0$.
The beginning of the investment period is denoted by time $0$, and also
referred to as `now'.\ The end of the investment period is denoted by time $1$, also referred to as `the end of the year'; however, in general, time-$1$ may indicate the endpoint of a time interval substantially longer than one year. In practice, these investments will be supplemented with fees to cover expenses, but we assume that the investments $\pi_{i}$ are net investments, after fees for expenses have been paid.\ In this case, we do not further have to take these fees into account in our analysis.\ 

Our objective is to investigate fair methods for the participants to divide
the total initial investment among themselves at time $1$. To be more precise, at time $0$,
the time $1$ observable non-negative random variables $W_{i}$ are chosen, where $W_{i}$
stands for the part of the total fund value available at time $1$ that will be
attributed to the participant $i$ at this time. Following Dhaene and Milevsky
(2024), we call these amounts to be paid at time $1$ the `compensations' to the
participants. In general, there is the possibility that the time $1$
realizations of all compensations $W_{i}$, $i=1,2,\ldots,n$, are equal to $0$.
At time $0$, we have to clearly specify what happens to the fund's proceeds in
this case.\ 

Apart from the $n$ participants, there is another agent, denoted by $n+1$ and
called the administrator. The role of the administrator is to collect the amounts $\pi_{i}$\ at
time $0$, to invest them, and to distribute the compensations $W_{i}$ to the
$n$ participants at time $1$. Seen from time $0$, the compensations to the $n$ participants are random variables defined and agreed upon by the participants at that time. As mentioned above, it might happen that all
compensations to the participants are equal to $0$.\ We assume that in case
this situation occurs, the `active' administrator receives the full proceeds of the
fund.\ On the other hand, in case the time $1$ realization of the compensation
of at least one participant is positive, the administrator does not receive
anything. The administrator's compensation at time $1$ is denoted by
$W_{n+1}$.\ We assume that the administrator also contributes an initial
(non-negative) investment $\pi_{n+1}$ to the fund, in return for receiving the
total proceeds of the fund when $W_{i}=0,$ $i=1,2,\ldots,n$.

When referring to the `$n+1$ agents' of a compensation-based RS scheme, we
mean the $n$ participants and the administrator.\ The sum of the initial
investments of all agents, that is, $\sum_{j=1}^{n+1}\pi_{j}$, is equal to the total
value of the fund at the beginning of the investment period. For simplicity,
we assume a zero interest rate and no expenses, but our results can easily be generalized to
the case of a (deterministic) investment return and expenses by assuming $\sum_{j=1}^{n+1} \pi_j$ is the accumulated value of the investments net of expenses.\ Under this assumption, the
time $1$ value of the investment fund is equal to $\sum_{j=1}^{n+1}\pi_{j}$.
Notice that our implicit assumption that at least one of the participants'
investments is positive implies that the time $1$ value of the fund is always positive.

The active administrator's time $1$ claim $W_{n+1}$ on the fund can be
expressed as follows:
\begin{equation}
W_{n+1}=\left\{
\begin{array}
[c]{ll}%
\sum_{j=1}^{n+1}\pi_{j} & :W_{1}=W_{2}=\cdots=W_{n}=0\\
0 & :\text{ otherwise \ \ \ \ \ \ \ \ \ \ \ \ \ \ \ \ \ \ \ \ }%
\end{array}
\right.  \label{C3}%
\end{equation}
This means that the administrator receives the full proceeds of the fund if
and only if all other participants receive zero compensation.\ Taking into
account that all compensations are non-negative, the event
$^{\prime}W_{1}=W_{2}=\ldots=W_{n}=0^{\prime}$ is equivalent to the event
$^{\prime}\sum_{j=1}^{n}W_{j}=0^{\prime}$. The random variables $\sum_{j=1}^{n}W_{j}$
and $W_{n+1}$ exhibit a dependency structure which is a special case of
`countermonotonicity', called `mutual exclusivity', meaning that both random variables
are non-negative, with one of them being positive implying that the other one
is equal to $0$.\ The concept of `mutual exclusivity' is considered in several
papers in the actuarial literature, see for example, Dhaene and Denuit (1999), Cheung
and Lo (2014) and Lauzier, Lin and Wang (2024).\ 

Introducing the random variable $P_{n+1}$, with $0\leq P_{n+1}\leq1$, for the proportion
of the aggregate investment that will be attributed to the administrator,
the compensation $W_{n+1}$ can be expressed as follows:
\begin{equation}
W_{n+1}=\left(  \sum_{j=1}^{n+1}\pi_{j}\right)  \times P_{n+1}, \label{C4}%
\end{equation}
with
\begin{equation}
P_{n+1}=\left\{
\begin{array}
[c]{ll}%
1 & :W_{1}=W_{2}=\cdots=W_{n}=0\\
0 & :\text{otherwise \ \ \ \ \ \ \ \ \ \ \ \ \ \ \ \ \ \ \ }%
\end{array}
\right.
\end{equation}

Introducing the random variable $P_{i}$, with $0\leq$ $P_{i}\leq1$, for the random
proportion of the aggregate investment attributed to participant $i$, we have
that the compensations of the participants can be expressed as%
\begin{equation}
W_{i}=\left(  \sum_{j=1}^{n+1}\pi_{j}\right)  \times P_{i}\qquad
i=1,2,\ldots,n. \label{C6a}%
\end{equation}
Figure 1 illustrates the cash flows for each agent at times 0 and 1.

\begin{figure}[ht]
  \centering
  \begin{tikzpicture}[
      font=\small,
      every node/.style={text=black},
      every path/.style={draw=black}
    ]
    \draw[thick] (0,0) -- (11,0);
    \draw[thick] (0,-0.12)  -- (0,0.12);
    \draw[thick] (11,-0.12) -- (11,0.12);
    \node[above=4pt] at (0,0.12)  {Time $0$};
    \node[above=4pt] at (11,0.12) {Time $1$};
    \node[anchor=north, align=center, inner ysep=8pt] at (0,-0.12) {%
      Invest $\pi_i$ into the fund%
    };
    \node[anchor=north, align=center, inner ysep=8pt] at (9.5,-0.12) {%
      Receive $W_i = \bigl(\sum_{j=1}^{n+1} \pi_j\bigr) \times P_i$ from the fund%
    };
  \end{tikzpicture}
  \caption{Cash flows for agent $i$, where $i = 1, 2, \ldots, n+1$.}
  \label{fig:timeline}
\end{figure}

Taking into account (\ref{C6a}), the random relative compensation $P_{n+1}$ attributed to
the administrator can be expressed as follows:
\begin{equation}
P_{n+1}=\left\{
\begin{array}
[c]{ll}%
1 & :P_{1}=P_{2}=\cdots=P_{n}=0\\
0 & :\text{otherwise, \ \ \ \ \ \ \ \ \ \ \ \ \ \ \ \ \ \ \ \ }%
\end{array}
\right.  \label{C6b}%
\end{equation}
or equivalently,
\begin{equation}
P_{n+1}={ 1}\left(  \sum_{j=1}^{n}P_{j}=0\right)  , \label{C6c}%
\end{equation}
where ${ 1}\left(  A\right)  $ stands for the indicator function, which
is $1$ if the event $A$ occurs and $0$ otherwise.\ This expression for $P_{n+1}$ also shows that $\sum_{j=1}^{n}P_{j}$ and $P_{n+1}$ are mutually exclusive.

Hereafter, we always assume that at time $1$, the total amount of the
available funds is fully distributed to the participants and the
administrator, that is,
\begin{equation}
\sum_{j=1}^{n+1}P_{j}=1. \label{C11a}%
\end{equation}
This condition is called the full allocation condition for the
compensation-based RS scheme.\ We also always implicitly assume that
\begin{equation}
0<\Pr\left[  P_{n+1}=0\right]  <1, \label{C30}%
\end{equation}
or equivalently,
\begin{equation}
0<\Pr\left[  P_{n+1}=1\right]  <1. \label{C30a}%
\end{equation}
The assumption (\ref{C30}) means that the events `at least one participant
receives a non-zero compensation' and `all participants receive a
zero-compensation' have a positive probability. Taking into account that
$P_{n+1}$ is Bernoulli distributed, from (\ref{C30a}) we find that
\begin{equation}
0<E\left[  P_{n+1}\right]  =\Pr\left[  P_{n+1}=1\right]  <1. \label{C31}%
\end{equation}
From (\ref{C4}) it follows then that
\begin{equation}
0<E\left[  W_{n+1}\right]  <\sum_{j=1}^{n+1}\pi_{j}. \label{C31a}%
\end{equation}
\bigskip Finally, from (\ref{C11a}) and (\ref{C30}) we have that
\begin{equation}
0<\sum_{k=1}^{n}E\left[  P_{k}\right]  =\Pr\left[  P_{n+1}=0\right]  <1.
\label{C34a}%
\end{equation}

We will call the administrator as described above an `active administrator',
where `active' means that he is the `owner' of the compensation $W_{n+1}%
$. The role of the active administrator was introduced in this setting by Dhaene and Milevsky (2024). Further in the paper, we will also introduce the `passive administrator'
whose compensation $W_{n+1}$ will be redistributed to the agents.

Let us now introduce the (deterministic) investment vector $\boldsymbol{\pi}$,
which is defined by
\begin{equation}
\boldsymbol{\pi}=\left(  \pi_{1},\pi_{2},\ldots,\pi_{n+1}\right)  ,
\label{C7a}%
\end{equation}
as well as the time $1$ measurable compensation vector $\mathbf{W}$, defined
by
\begin{equation}
\mathbf{W}=\left(  W_{1},W_{2},\ldots,W_{n+1}\right)  , \label{C7b}%
\end{equation}
and the time $1$ measurable relative compensation vector $\mathbf{P}$,
defined by%
\begin{equation}
\boldsymbol{P}=\big(P_{1},P_{2},\ldots,P_{n+1}\big). \label{C7}%
\end{equation}
Taking into account the introduced vector notations, we can rewrite (\ref{C4})
and (\ref{C6a}) in the following way:%
\begin{equation}
\boldsymbol{W}=\left(  \sum_{j=1}^{n+1}\pi_{j}\right)  \times\boldsymbol{P}.
\label{C6}%
\end{equation}

In the following definition, we introduce the set of all relative compensation
vectors for the $n+1$ agents.\ \ 

\begin{definition}
The set $\mathcal{R}_{n+1}$\ is defined by
\begin{equation}
\mathcal{R}_{n+1}=\left\{  \left(  P_{1},P_{2},\ldots,P_{n+1}\right)  \in
(L_{+}^{0})^n\mid\sum_{j=1}^{n+1}P_{j}=1\text{ and }P_{n+1}={1}\left(
\sum_{j=1}^{n}P_{j}=0\right)  \right\}  . \label{C6d}%
\end{equation}

\end{definition}

The set $\mathcal{R}_{n+1}$ consists of all $\left(  n+1\right)
$-dimensional relative compensation vectors. Any $\boldsymbol{W}=\left(  \sum
_{j=1}^{n+1}\pi_{j}\right)  \times\boldsymbol{P}$, where $\boldsymbol{P}%
\in\mathcal{R}_{n+1}$ is a reallocation (or redistribution) of the time $1$
value of the fund between the $n$ participants and the administrator, such
that $\sum_{j=1}^{n}W_{j}$ and $W_{n+1}$ are mutually exclusive.\ The latter
condition means that the events `at least one $W_{i}$ is positive' and
`$W_{n+1}$ is positive' are mutually exclusive.\ 

Compensation-based RS is a two-stage process.\ At time $0$, any agent makes an initial investment $\pi_{i}$\ and their
aggregate investment $\left(  \sum_{j=1}^{n+1}\pi_{j}\right)  $ is reallocated
by transforming $\boldsymbol{\pi}$ into random vector $\boldsymbol{W}$ defined
in (\ref{C6}), with $\boldsymbol{P}\in\mathcal{R}_{n+1}$. At time $1$, the participants and the administrator receive
the respective realizations of the compensations that were attributed to
them.$\ $
Hereafter, we denote the compensation $W_{i}$ by $W_{i}\left[
\boldsymbol{\pi}\right]  $ and the compensation vector $\boldsymbol{W}$ by
$\boldsymbol{W}\left[  \boldsymbol{\pi}\right]  $ in case we want to emphasize
that the underlying investment vector is $\boldsymbol{\pi}$. Similarly, we
write the relative compensation $P_{i}$ by $P_{i}\left[  \boldsymbol{\pi}\right]  $
and the relative compensation vector $\boldsymbol{P}$ by $\boldsymbol{P}\left[
\boldsymbol{\pi}\right]  $ to denote their dependence on $\boldsymbol{\pi}$.
When the meaning is clear, we will omit the $\left[  \boldsymbol{\pi}\right]
$ in the notation.\ \ 

We are now ready to define a compensation-based RS scheme with an active administrator.\

\begin{definition}
A compensation-based risk-sharing scheme for a given group of $n$ participants
and an active administrator is a pair $\left(  \boldsymbol{\pi},\boldsymbol{P}%
\right)  $, where $\boldsymbol{\pi}=\left(  \pi_{1},\pi_{2},\ldots,\pi
_{n+1}\right)  $ is the initial investment vector, while $\boldsymbol{P}%
=\left(  P_{1},P_{2},\ldots,P_{n+1}\right)  $ is a relative compensation
vector, that is $\boldsymbol{P}\in\mathcal{R}_{n+1}$.\ Moreover, the
compensations attributed to the $n+1$ agents are expressed by the compensation
vector $\boldsymbol{W}$, which is defined by
\begin{equation}
W_{i}=\left(  \sum_{j=1}^{n+1}\pi_{j}\right)  \times P_{i}\qquad
i=1,2,\ldots,n+1. \label{C1aa}%
\end{equation}

\end{definition}

A compensation-based RS \textit{scheme} with an active administrator is
specified by its investment vector $\boldsymbol{\pi}$ and its relative
compensation vector $\boldsymbol{P}$ (or its compensation vector
$\boldsymbol{W}$). The administrator is called `active' in the sense that he
is entitled to a random compensation at time $1$.\ Further in this paper, we
will also consider a `passive' administrator, who in not entitled to any
random compensation at time $1$.\ We remark that investments are known at time $0$, whereas the random compensations remain
unknown until they become observable at time $1$.\ As $\boldsymbol{P}\in\mathcal{R}_{n+1}$, a compensation-based RS scheme with an
active administrator satisfies the following \textbf{full allocation
condition}:
\begin{equation}
\sum_{j=1}^{n+1}W_{j}=\sum_{j=1}^{n+1}\pi_{j}, \label{C1}%
\end{equation}
which means that such an RS scheme is a fully funded system and has no
insolvency issues: the total amount that will be distributed at time $1$ will
also be available at that time.

After having introduced compensation-based RS \textit{schemes}, we can now
define a compensation-based RS \textit{rule} as an appropriate set of compensation-based RS schemes.

\begin{definition}
A compensation-based risk-sharing rule for a given group of $n$ participants
and an active administrator is a mapping $\boldsymbol{P}:\left(
\mathbb{R}
^{+}\right)  ^{n+1}\rightarrow\mathcal{R}_{n+1}$ which transforms any
investment vector $\boldsymbol{\pi}$ in $\left(
\mathbb{R}
^{+}\right)  ^{n+1}$ into a relative compensation vector $\boldsymbol{P}%
\left[  \boldsymbol{\pi}\right]  $:%
\begin{equation}
\boldsymbol{\pi}\in\left(
\mathbb{R}
^{+}\right)  ^{n+1}\rightarrow\boldsymbol{P}\left[  \boldsymbol{\pi}\right]
\in\mathcal{R}_{n+1}. \label{C1c}%
\end{equation}

\end{definition}

In this paper, when we consider a `RS scheme' or a `RS rule', we always mean a
`compensation-based RS scheme' or a `compensation-based RS rule'.\ 

\section{Some examples}

In this section, we illustrate the concept of compensation-based RS with an
active administrator with some examples. 

\begin{example}
Suppose that each participant $i$ of a group of $n$ individuals is exposed to
a (random) non-negative loss $X_{i}$ at time $1$.\ The $n$ participants decide
to share the risk related to these losses amongst themselves.\ Therefore, they
appoint an administrator, called agent $n+1$, and set up an RS scheme $\left(
\boldsymbol{\pi},\boldsymbol{P}\right)  $, with%
\begin{equation}
P_{i}=\frac{X_{i}}{\sum_{j=1}^{n+1}X_{j}},\qquad i=1,2,\ldots,n+1, \label{C35}%
\end{equation}
with $X_{n+1}$ defined by
\begin{equation}
X_{n+1}=1\left(  \sum_{j=1}^{n}X_{j}=0\right)  . \label{C35b}%
\end{equation}
It is easy to see that $X_{n+1}$ and $\sum_{j=1}^{n}X_{j}$ are mutually
exclusive.\ This mutual exclusivity property implies that the denominator in
(\ref{C35}) is never equal to $0$, so that the $P_i$'s are always well-defined.\ Obviously, the mutual exclusivity of $\left(  X_{n+1},\sum_{j=1}%
^{n}X_{j}\right)  $ is equivalent to the mutual exclusivity of $\left(
P_{n+1},\sum_{j=1}^{n}P_{j}\right)  $, which is equivalent to the mutual
exclusivity of $\left(  W_{n+1},\sum_{j=1}^{n}W_{j}\right)  $.\newline The
participants in the RS scheme with relative compensation vector defined by (\ref{C35})
share the proceeds of the fund proportionally, where each participant's
proportion is equal to the proportion that he contributes to aggregate claims
$\sum_{j=1}^{n+1}X_{j}$.\ On the other hand, the administrator receives the
full proceeds of the fund in case all participants experience a zero loss,
whereas he receives nothing in the other case. Typically, insurance losses
have a strictly positive probability mass at zero, implying that our
assumption that $\Pr\left[  P_{n+1}=1\right]  >0$ is reasonable.\newline
Remark that the compensations $W_{i}$ defined via the relative compensations
(\ref{C35}) can be expressed as follows:
\begin{equation}
W_{i}=\frac{\sum_{j=1}^{n+1}\pi_{j}}{\sum_{j=1}^{n+1}X_{j}}\times X_{i},\qquad
i=1,2,\ldots,n+1. \label{C35a}%
\end{equation}
This means that the RS scheme $\left(  \boldsymbol{\pi},\boldsymbol{P}\right)
$ is such that each participant $i$ is compensated the same time $1$
observable proportion of his loss $X_{i}$, while the administrator receives the full proceeds of the fund in case each participant has a zero claim. This seems to be a reasonable way to distribute the total fund over the $n+1$ agents, provided each
agent's initial investment $\pi_{i}$ is `reasonable' or `fair'.\ In Section 4, we
introduce and investigate `actuarially fair' initial investments.\\

\noindent As with classical insurance, a compensation-based risk-sharing scheme is susceptible to moral hazard: participants have an incentive to exaggerate or fabricate losses in order to obtain a larger share of the fund's proceeds. To mitigate this, the $X_i$ can be interpreted as adapted claim sizes, for example, proportionally reduced by a factor $\alpha \in (0,1)$ for each participant $i$. In this way, standard techniques from classical insurance, such as proportional coverage, deductibles, and no-claims discounts, can be adapted to the compensation-based risk-sharing setting in order to reduce moral hazard.$\vartriangleleft$
\end{example}

\begin{example}
Consider a group of $n$ participants.\ Each of them may experience a
particular event of a given type in the observation period $\left[
0,1\right]  $.\ Possible events include the participant's death, survival,
being hospitalized, being diagnosed with a critical illness, etc.\ For
simplicity, hereafter we will assume that all predefined events are `survival
to time $1$', but any other choice for the predefined events is possible.\ For
each participant $i$, we introduce the indicator variable $I_{i}$, which is
defined by
\begin{equation}
I_{i}=\left\{
\begin{array}
[c]{ll}%
1 & :i\text{ survives until time }1\\
0 & :i\text{ dies before time }1
\end{array}
\right.  \label{C5a}%
\end{equation}

The participants appoint an active administrator and attach the following
indicator variable $I_{n+1}$ to him:
\begin{equation}
I_{n+1}=%
{\displaystyle\prod\limits_{j=1}^{n}}
\left(  1-I_{j}\right)  .\ \label{C5b}%
\end{equation}
We further introduce the notation $p_i = \Pr \left[ I_i=1 \right]$ and $q_i = \Pr \left[ I_i=0 \right], \hspace{2mm} i=1,\ldots,n+1$. Obviously, $\sum_{j=1}^{n}I_{j}$ and $I_{n+1}$ are mutually exclusive.\ The
participants decide to set up an RS scheme $\left(  \boldsymbol{\pi
},\boldsymbol{P}\right)  $, with the relative compensation vector
$\boldsymbol{P}$\ given by
\begin{equation}
P_{i}=\frac{f_{i}\times I_{i}}{\sum_{j=1}^{n+1}f_{j}\times I_{j}},\qquad
i=1,2,\ldots,n+1, \label{C5}%
\end{equation} 
where $f_{i}$, for $i=1,2,\ldots,n+1$, are strictly positive real
numbers.\ \newline The RS scheme $\left(  \boldsymbol{\pi},\boldsymbol{P}%
\right)  $\ defined via the relative compensations (\ref{C5}) is a special
case of the RS scheme considered in Example 1.\ Such RS schemes with an active
administrator have been investigated in detail in Dhaene and Milevsky (2024),
who investigate fair methods for the surviving participants to share the total
investment among themselves if one or more survive.\ In case all participants
pass away, the administrator receives the full proceeds of the fund. An RS
scheme of this type is often called a tontine fund.\ Dhaene and Milevsky
(2024) call $f_{i}$ the number of \textit{tontine shares} invested in the
tontine fund, and describe the RS scheme defined by (\ref{C5}) as a scheme
where the proceeds of the fund are equally shared among all surviving tontine
shares, where the total number of surviving tontine shares is given by the
denominator in (\ref{C5}).\ They consider the situation where initial
investments (wealth) and survival probabilities (health) vary among
participants, which is called the heterogeneous case.\ As a special case, they
also examine the situation where all participants invest the same amount and
the random variables $I_{1},I_{2},\ldots,I_{n}$ are i.i.d., which they refer to as the
homogeneous case.\ Denuit and Robert (2025) consider a similar scheme (with a
passive administrator, see further), where the $f_{i}$ are related to (what
they call) \textit{protection units} from the investment fund, and where the
total value of the tontine fund is divided equally among all claiming
units.\ An essential difference between the two approaches is that
Dhaene and Milevsky (2024) introduce the active administrator who contributes
to the investments and will own the proceeds of the fund in case not any
person survives, whereas Denuit and Robert (2025) assume a passive
administrator who does not contribute to the investments, and in case nobody
experiences the event under consideration, participants receive their initial
investment back.\ We will come back to this essential difference between the
two approaches in a further section of this paper.\ \newline Special cases of
the RS scheme $\left(  \boldsymbol{\pi},\boldsymbol{P}\right)  $\ defined via
the relative compensations (\ref{C5}) have been considered in several papers.\ Dhaene and
Milevsky (2024) consider the following choices for the claiming units $f_{i}$:%

\begin{equation}
f_{i}^{\text{DM}}=\frac{\pi_{i}}{p_i},\qquad
i=1,2,\ldots,n+1.
\end{equation}
Tavin (2023) proposes the following set of $f_{i}$:
\begin{equation}
f_{i}^{\text{T}}=\pi_{i},\qquad i=1,2,\ldots,n+1. \label{C91}%
\end{equation}
Denuit and Robert (2023) also consider the case of uniform $f_{i}$:
\begin{equation}
f_{i}^{\text{DR}}=1,\qquad i=1,2,\ldots,n+1.
\end{equation}
Finally, in Dhaene and Milevsky (2024), the RS scheme with
\[
f_{i}=\frac{1}{p_i},\qquad i=1,2,\ldots,n+1
\]
is also considered. Interpretations and motivations for any of these choices for the number of
claiming units $f_{i}$ can be found in the above-mentioned papers.\ \newline
We remark that the compensation vector $\boldsymbol{W}$ with relative
compensation vector determined by (\ref{C5}) can be rewritten as
\begin{equation}
W_{i}=\left(  \frac{\sum_{k1}^{n+1}\pi_{k}}{\sum_{j=1}^{n+1}f_{j}\times I_{j}%
}\right)  \times f_{i}\times I_{i},\qquad i=1,2,\ldots,n+1.
\end{equation}
This means that under the RS scheme $\left(  \boldsymbol{\pi},\boldsymbol{P}%
\right)  $ defined by (\ref{C5}), each surviving participant $i$ is
compensated the same time $1$ observable proportion of his surviving shares or
protection units $f_{i}$. The proportion is the random payment
per claiming protection unit, which is determined such that the full
allocation condition is fulfilled.\ $\hfill\vartriangleleft$
\end{example}

\begin{example}
Consider the loss vector $\left(  X_{1},X_{2},\ldots,X_{n}\right)  $,
describing the non-negative losses of the $n$ participants in the observation
period $\left[  0,1\right]  $.\ The $i$-th\ order statistic\ $X_{(i)}$\ of
$\left(  X_{1},X_{2},\ldots,X_{n}\right)  $ is the $i$-th smallest value in
$\left(  X_{1},X_{2},\ldots,X_{n}\right)  $. Hence,
\[
X_{(1)}\leq X_{(2)}\leq\cdots\leq X_{(n)}.
\]
Furthermore, we define $X_{(n+1)}$ as follows:
\begin{equation}
X_{(n+1)}=1\left(  \sum_{j=1}^{n}X_{j}=0\right)  . \label{C9a}%
\end{equation}
Notice that we only consider the order statistics for the losses of the $n$
participants.\ The random variable $X_{(n+1)}$ defined above is not an order
statistic. The notation $X_{(n+1)}$ is introduced only to make notations
uniform and simple.\ \newline Suppose that the participants set up a fund in
which each participant $i$ invests an amount $\pi_{i}$.\ Moreover, they appoint an
administrator, who contributes the amount $\pi_{n+1}$ to the fund.\ The $n+1$
agents determine the compensations according to the risk-sharing scheme
$\left(  \boldsymbol{\pi},\boldsymbol{P}\right)  $, with relative compensation
vector $\boldsymbol{P}$ determined by%
\begin{equation}
P_{i}=\frac{X_{(i)}}{\sum_{j=1}^{n+1}X_{(j)}},\qquad i=1,2,\ldots,n+1.
\end{equation}
The interpretation of this compensation-based RS scheme $\left(
\boldsymbol{\pi},\boldsymbol{P}\right)  $ is as follows.\ For participants
who are ordered in decreasing risk-bearing capacity (e.g., decreasing wealth
or decreasing age), a lower risk-bearing capacity leads to a higher
compensation.\ In case all participants have a zero-claim, the proceeds of the
fund are fully transferred to the administrator.\ In settings with a small number of participants who know one another well and are willing to contribute more than their `fair' share to support others in the pool, such a scheme can be particularly useful. Such a situation arises frequently in practice: within families, close-knit communities or groups bound by friendship or solidarity, members often accept cross-subsidization as a natural consequence of mutual support. For instance, healthier or wealthier members of a family may bear disproportionate share of total expenses without expecting strict actuarial fairness in return. $\hfill\vartriangleleft$
\end{example}

\begin{example}
Consider the vector $\left(  X_{1},X_{2},\ldots,X_{n}\right)  $, describing
the non-negative losses of the $n$ participants in the observation period
$\left[  0,1\right]  $. The participants set up a fund in which each participant $i$
invests an amount $\pi_{i}$.\ They also appoint an administrator, who
contributes the amount $\pi_{n+1}$ to the fund.\ The participants and the administrator decide to
share the proceeds of the fund according to the RS scheme $\left(
\boldsymbol{\pi},\boldsymbol{P}\right)  $, with the relative compensation
vector $\boldsymbol{P}$ being $\left(  X_{1},X_{2},\ldots,X_{n}\right)  $ -
measurable. This means that the randomness of $\boldsymbol{P}$ is only due to
the randomness of the vector $\left(  X_{1},X_{2},\ldots,X_{n}\right)  $%
.$\ $Hence, there exist functions $g_{i}:$\ $\left(
\mathbb{R}
^{+}\right)  ^{n}\rightarrow%
\mathbb{R}
^{+}$ such that
\begin{equation}
P_{i}=g_{i}\left(  X_{1},X_{2},\ldots,X_{n}\right)  ,\qquad i=1,\ldots,n+1.
\label{C90}%
\end{equation}
Notice that the relative compensation vector will in general also depend on
the initial investment vector $\boldsymbol{\pi}$ of the RS scheme $\left(
\boldsymbol{\pi},\boldsymbol{P}\right)  $ under consideration.\ But as
$\boldsymbol{\pi}$ is fixed, we do not explicitly indicate the dependence of
$\boldsymbol{\pi}$ in the notation of the relative compensation vector. The
full allocation condition (\ref{C11a}) \ can now be expressed
as follows:
\[
\sum_{i=1}^{n+1}g_{j}\left(  X_{1},X_{2},\ldots,X_{n}\right)  =1.
\]
Moreover, the relative compensation vector is assumed to satisfy
\begin{equation}
g_{n+1}\left(  X_{1},X_{2},\ldots,X_{n}\right)  =1\left(  \sum_{j=1}^{n}%
g_{j}\left(  X_{1},X_{2},\ldots,X_{n}\right)  =0\right)  , \label{C90b}%
\end{equation}
which implies that $\sum_{i=1}^{n}g_{i}\left(  X_{1},X_{2},\ldots,X_{n}\right)
$ and $g_{n+1}\left(  X_{1},X_{2},\ldots,X_{n}\right)  $ are mutually
exclusive. \ 

The RS scheme considered in Example 1 is a special case of (\ref{C90}), with the relative compensation vector determined from
\[
P_{i}=g_{i}\left(  X_{1},X_{2},\ldots,X_{n}\right)  =\frac{X_{i}}{\sum
_{j=1}^{n+1}X_{j}},\qquad i=1,\ldots,n+1,
\]
with $X_{n+1}$ given by (\ref{C35b}).\ \newline Also the RS scheme considered
in Example 2 is a special case, with the relative compensations defined by
\[
P_{i}=g_{i}\left(  X_{1},X_{2},\ldots,X_{n}\right)  =\frac{f_{i}\times I_{i}%
}{\sum_{j=1}^{n+1}f_{j}\times I_{j}},\qquad i=1,\ldots,n+1,
\]
with the Bernoulli random variables $I_{i}$ as defined in (\ref{C5a}) and (\ref{C5b}%
).\ \newline Another special case of (\ref{C90}) arises by making the
following choice for the relative compensations:
\[
P_{i}=g_{i}\left(  X_{1},X_{2},\ldots,X_{n}\right)  =\frac{X_{(i)}}{\sum
_{j=1}^{n+1}X_{(j)}},\qquad i=1,\ldots,n+1,
\]
where $X_{(n+1)}$ is defined in (\ref{C9a}). This special case was considered
in Example 3.\ $\hfill\vartriangleleft$
\end{example}

\begin{example} \label{example:continuous}
Returning to the risk-sharing scheme in Example 1, we investigate a mixed distribution for $X_i$. For $i = 1, 2, \ldots, n$, let
\begin{equation}
    X_i = I_i Y_i, \label{eq:Xi-def}
\end{equation}
where $I_i \sim \mathrm{Bernoulli}(p_i)$ and $Y_i \sim \mathrm{Gamma}(\alpha_i, \theta)$, where the shape $\alpha_i > 0$ depends on the participant and the scale $\theta > 0$ is common across participants. We assume that $(I_1, \ldots, I_n)$ has mutually independent components, $(Y_1, \ldots, Y_n)$ has mutually independent components, and the $\sigma$-algebras $\sigma(I_1, \ldots, I_n)$ and $\sigma(Y_1, \ldots, Y_n)$ are independent. Let
\begin{equation}
X_{n+1} = 1\Big(\sum_{j=1}^n X_j = 0\Big) = 1\Big(\sum_{j=1}^n I_j = 0\Big), 
\label{eq:Xn+1-def}
\end{equation}
where the second equality holds because $Y_j > 0$ almost surely. Each loss has a strictly positive point mass at zero, $\Pr[X_i = 0] = q_i = 1-p_i$. 
Suppose $P_i$ for $i=1,2,\ldots,n+1$ is defined as in (\ref{C35}). Since $\sum_{j=1}^n I_j Y_j = 0$ if and only if $\sum_{j=1}^n I_j = 0$, the denominator is almost surely positive, and for $i = 1,2, \ldots,n$,
\begin{equation}
    P_i = \begin{cases} \dfrac{Y_i}{\sum_{j=1}^n I_j Y_j} & \text{if } I_i = 1,\\[6pt] 0 & \text{if } I_i = 0 \end{cases} 
\label{eq:Pi-cases}
\end{equation}
and
\begin{equation}
    P_{n+1} = 1\Big(\sum_{j=1}^n I_j = 0\Big).
\end{equation}
For mutually independent $(Y_1,\ldots, Y_n)$ defined above, we have the following results:
\begin{enumerate}
    \item[(a)] $T = \sum_{j=1}^n Y_j \sim \mathrm{Gamma}(\sum_{j=1}^n \alpha_j, \theta)$;
    \item[(b)] $T$ is independent of $(Y_1/T, \ldots, Y_n/T)$;
    \item[(c)] $(Y_1/T, \ldots, Y_n/T) \sim \mathrm{Dirichlet}(\alpha_1, \ldots, \alpha_m)$.
    \item[(d)] $ Y_i/T\sim \text{Beta}\left( \alpha_i, \sum_{j\neq i} \alpha_j \right)$
\end{enumerate}
These results are proved in Theorem 4.1 and Theorem 4.2 of Devroye (1986) on page 594 and page 595 (respectively). Define the claimant set $S := \{j \in \{1,\ldots,n \} \mid I_j = 1\}$. The above results along with the $\sigma$-algebras $\sigma(I_1,\ldots,I_n)$ and $\sigma(Y_1,\ldots,Y_n)$ being independent show that for a participant $k$ in the claimant set $S$,
\begin{equation}
P_k = R_k, \quad R_k \sim \mathrm{Beta}\left(\alpha_k, \sum_{j \in S\setminus \{k\}}\alpha_j \right), \qquad \text{independent of }  \sum_{j \in S} Y_j \sim \mathrm{Gamma}\left(\sum_{j\in S} \alpha_j, \theta\right). \label{eq:cond-rep}
\end{equation}
\end{example}

\section{Actuarial fairness of compensation-based risk-sharing with an active
administrator}

\subsection{Actuarially fair risk-sharing schemes}

An RS scheme $\left(  \boldsymbol{\pi},\boldsymbol{P}\right)  $ is said to be
actuarially fair for the participants if the time $1$ value of each
participant's initial investment $\pi_{i}$ in the fund is equal to the
expected value of the compensation $W_{i}$ that he will receive at time $1$.
This means that no participant experiences a gain or loss on average by
joining the pool. Actuarial fairness of particular compensation-based RS
schemes has been investigated in Bernard et al.\ (2024), Milevsky and Dhaene
(2024) and Denuit and Robert (2025), amongst others.\ 

\begin{definition}
The RS scheme $\left(  \boldsymbol{\pi},\boldsymbol{P}\right)  $ with an
active administrator is actuarially fair for each participant if the following
conditions hold:
\[
\pi_{i}=E\left[  W_{i}\right]  ,\qquad i=1,2,\ldots,n.
\]

\end{definition}

Taking into account (\ref{C1aa}), we can rewrite the actuarial fairness
conditions for the $n$ participants as
\begin{equation}
\pi_{i}=\left(  \sum_{j=1}^{n+1}\pi_{j}\right)  \times E\left[  P_{i}\right]
,\qquad i=1,2,\ldots,n. \label{C14}%
\end{equation}

Notice that in a real world context, the particular RS scheme that is chosen
by a group of participants may depend on the social cohesion between them,
ranging from solidarity to pure individualism.\ Especially in small pools of
connected participants (e.g., family members or tribe members), actuarial fairness may not be
the first concern and may be replaced by a form of organized transfer, e.g.,
from the elder participants to the younger, or from the richer to the poorer
ones.\ In this section however, we will further investigate actuarial fairness
of RS schemes.\ 

In our general compensation-based RS set-up with an active administrator,
all proceeds of the fund are transferred to the administrator in case no
participant receives a positive compensation.\ In return, the administrator
pays an initial investment $\pi_{n+1}$ for this benefit.\ Let us now define
actuarial fairness for the administrator.

\begin{definition}
The RS scheme $\left(  \boldsymbol{\pi},\boldsymbol{P}\right)  $ with an active administrator is
actuarially fair for the active administrator if
\[
\pi_{n+1}=E\left[  W_{n+1}\right]  .
\]

\end{definition}

Taking into account (\ref{C31}) and (\ref{C1aa}), we can rewrite this
actuarial fairness condition for the administrator as follows:
\begin{equation}
\pi_{n+1}=\left(  \sum_{j=1}^{n+1}\pi_{j}\right)  \times\Pr\left[
P_{n+1}=1\right]  . \label{C15}%
\end{equation}
As $\Pr\left[  P_{n+1}=0\right]  $ is strictly positive by assumption, see
(\ref{C30}), the actuarial fairness condition for the administrator can also
be expressed as follows:
\begin{equation}
\pi_{n+1}=\left(  \sum_{j=1}^{n}\pi_{j}\right)  \times\frac{\Pr\left[
P_{n+1}=1\right]  }{\Pr\left[  P_{n+1}=0\right]  }. \label{C17}%
\end{equation}

We can summarize relations (\ref{C15}) and (\ref{C17}) in the following lemma.

\begin{lemma}
    If the RS scheme $\left(  \boldsymbol{\pi},\boldsymbol{P}\right)  $ is actuarially fair for the active administrator, then we have that
    \begin{equation}
        \sum_{j=1}^{n+1} \pi_j = \frac{1}{\Pr[P_{n+1}=0]} \times \sum_{j=1}^n \pi_j = \frac{1}{\Pr [P_{n+1}=1]} \times \pi_{n+1}
    \end{equation}
\end{lemma}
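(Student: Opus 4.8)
The plan is to derive both claimed equalities directly from the two equivalent formulations of the administrator's actuarial fairness condition, namely (\ref{C15}) and (\ref{C17}), using only elementary algebra together with the standing assumptions that $0 < \Pr[P_{n+1}=1] < 1$ (see (\ref{C31})) and $0 < \Pr[P_{n+1}=0] < 1$ (see (\ref{C34a})).

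First I would establish the rightmost equality. Starting from (\ref{C15}), which holds by the hypothesis of actuarial fairness for the active administrator, I divide both sides by $\Pr[P_{n+1}=1]$. This division is legitimate precisely because (\ref{C31}) guarantees $\Pr[P_{n+1}=1]>0$, and it immediately yields $\sum_{j=1}^{n+1}\pi_j = \pi_{n+1}/\Pr[P_{n+1}=1]$.

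Next I would establish the middle equality. Here I start from (\ref{C17}) and split the total investment as $\sum_{j=1}^{n+1}\pi_j = \sum_{j=1}^{n}\pi_j + \pi_{n+1}$. Substituting the expression for $\pi_{n+1}$ supplied by (\ref{C17}) and factoring out $\sum_{j=1}^{n}\pi_j$ produces the bracketed factor $1 + \Pr[P_{n+1}=1]/\Pr[P_{n+1}=0]$. The key simplification is that $P_{n+1}$ is Bernoulli distributed, so $\Pr[P_{n+1}=0] + \Pr[P_{n+1}=1] = 1$; this collapses the bracketed factor to $1/\Pr[P_{n+1}=0]$, giving the claimed middle expression. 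The division by $\Pr[P_{n+1}=0]$ is justified by (\ref{C34a}).

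Since the computation is purely algebraic, there is no substantial obstacle; the only points requiring care are verifying that both $\Pr[P_{n+1}=0]$ and $\Pr[P_{n+1}=1]$ are nonzero, so that the reciprocals are well-defined, and invoking the Bernoulli identity $\Pr[P_{n+1}=0]+\Pr[P_{n+1}=1]=1$ at the right moment to merge the two fractions. Both facts are already recorded in the excerpt, so the lemma amounts to a bookkeeping summary of (\ref{C15}) and (\ref{C17}).
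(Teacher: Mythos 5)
Your proposal is correct and follows the paper's own route: the paper presents this lemma explicitly as a summary of relations (\ref{C15}) and (\ref{C17}), and your algebra (dividing (\ref{C15}) by $\Pr[P_{n+1}=1]>0$ for the rightmost equality, and combining (\ref{C17}) with the Bernoulli identity $\Pr[P_{n+1}=0]+\Pr[P_{n+1}=1]=1$ for the middle one) is exactly the bookkeeping the paper leaves implicit. The only cosmetic difference is that you spell out the verification the paper omits.
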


In case the relative compensations $P_{i}$ of the participants are i.i.d., we have
that
\[
\Pr\left[  P_{n+1}=1\right]  =\Pr\left[  P_{1}=P_{2}=\cdots=P_{n}=0\right]
=\left(  \Pr\left[  P_{1}=0\right]  \right)  ^{n}.
\]
This means that if the number of participants $n$ is sufficiently large, we
find that $\Pr\left[  P_{n+1}=1\right]  \approx0$ and $\Pr\left[
P_{n+1}=0\right]  \approx1$, which implies that the actuarially fair initial
investment $\pi_{n+1}$ of the administrator is close to zero.$\ $

Let us now consider the relation between actuarial fairness for the
participants and actuarial fairness for the active administrator.

\begin{proposition}
If the RS scheme $\left(  \boldsymbol{\pi},\boldsymbol{P}\right)  $ is
actuarially fair for the $n$ participants, then it is also actuarially fair
for the active administrator.
\label{Prop1}
\end{proposition}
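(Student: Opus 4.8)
The plan is to exploit the full allocation condition (\ref{C1}), which holds as a pathwise identity rather than merely in expectation, and then pass to expectations. First I would recall that $\boldsymbol{P}\in\mathcal{R}_{n+1}$ forces $\sum_{j=1}^{n+1}P_{j}=1$ surely, so that multiplying by the deterministic constant $\sum_{j=1}^{n+1}\pi_{j}$ gives the budget identity $\sum_{j=1}^{n+1}W_{j}=\sum_{j=1}^{n+1}\pi_{j}$ of (\ref{C1}). Since every $P_{j}$ takes values in $[0,1]$ and $\sum_{j=1}^{n+1}\pi_{j}$ is a finite deterministic number, each $W_{j}$ is bounded and hence integrable, so taking expectations on both sides is legitimate and, by linearity, yields $\sum_{j=1}^{n+1}E\left[W_{j}\right]=\sum_{j=1}^{n+1}\pi_{j}$.

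Next I would invoke the hypothesis that the scheme is actuarially fair for the $n$ participants, that is, $E\left[W_{i}\right]=\pi_{i}$ for $i=1,2,\ldots,n$. Substituting these into the expectation identity gives $\sum_{i=1}^{n}\pi_{i}+E\left[W_{n+1}\right]=\sum_{j=1}^{n+1}\pi_{j}$. Cancelling the common sum $\sum_{i=1}^{n}\pi_{i}$ from both sides leaves $E\left[W_{n+1}\right]=\pi_{n+1}$, which is precisely the actuarial fairness condition for the active administrator. This completes the argument.

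I do not expect any genuine obstacle here: the result is essentially the linearity of expectation applied to the almost-sure budget identity (\ref{C1}), and the boundedness of the $W_{j}$ removes any integrability concern. The only point worth stating carefully is that (\ref{C1}) is a pathwise (deterministic) equality, so that taking expectations requires no independence or distributional assumptions on the $P_{j}$; in particular, the mutual exclusivity structure of $\left(P_{n+1},\sum_{j=1}^{n}P_{j}\right)$ plays no role in this implication, which is why the conclusion holds for an arbitrary relative compensation vector in $\mathcal{R}_{n+1}$.
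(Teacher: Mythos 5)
Your proof is correct and is essentially the paper's own argument: the paper sums the participants' fairness conditions (\ref{C14}) and then uses $\sum_{k=1}^{n}E\left[P_{k}\right]=\Pr\left[P_{n+1}=0\right]$, which is nothing but the expectation of the full allocation condition, to arrive at the administrator's fairness condition (\ref{C17}). Your version states the same accounting argument directly in terms of the compensations $W_{j}$ and the pathwise identity (\ref{C1}) rather than through the relative compensations $P_{j}$ and the probability identity (\ref{C34a}); this is a purely presentational difference, and your observation that integrability is automatic and that mutual exclusivity plays no role is accurate.
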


\begin{proof}
From the actuarial fairness conditions (\ref{C14}) for the $n$ participants,
we find that
\[
\sum_{k=1}^{n}\pi_{k}=\left(  \sum_{j=1}^{n+1}\pi_{j}\right)  \times\sum
_{k=1}^{n}E\left[  P_{k}\right]  .
\]
Taking into account (\ref{C34a}) leads to the actuarial fairness condition
(\ref{C17}) for the administrator.
\end{proof}

From the previous proposition, we can conclude that if the RS scheme $\left(
\boldsymbol{\pi},\boldsymbol{P}\right)  $ is not actuarially fair for the active
administrator, then it can also not be actuarially fair for all
participants.\ In particular, this situation will occur in case the
administrator makes an investment of zero.\ This observation is further
explored in the following proposition.\ 

\begin{remark}
From Proposition 1, we can conclude that if the active administrator of the RS scheme $\left(  \boldsymbol{\pi},\boldsymbol{P}\right)  $ makes a zero initial investment, i.e., $\pi_{n+1}=0$, then there must be at least one participant $i$ whose investment $\pi_i$ exceeds his expected compensation, i.e., $\pi_{i}>E\left[  W_{i}\right]  $. In other words, such an RS scheme cannot be actuarially fair.
\end{remark}

In the following proposition, we consider several necessary and sufficient
conditions for actuarial fairness for the $n+1$ agents involved in the RS
scheme $\left(  \boldsymbol{\pi},\boldsymbol{P}\right)  $.\ 

\begin{proposition}
The RS scheme $\left(  \boldsymbol{\pi},\boldsymbol{P}\right)  $ with active
administrator is actuarially fair for the $n+1$ agents if and only if any of the
following conditions is satisfied:\newline\textbf{Condition 1}: The RS scheme
$\left(  \boldsymbol{\pi},\boldsymbol{P}\right)  $ satisfies
\begin{equation}
\pi_{i}=\left(  \sum_{j=1}^{n+1}\pi_{j}\right)  \times E\left[  P_{i}\right]
,\qquad i=1,2,\ldots,n+1. \label{C15c}%
\end{equation}
\newline\textbf{Condition 2}: The RS scheme $\left(  \boldsymbol{\pi
},\boldsymbol{P}\right)  $ satisfies%
\begin{equation} 
\pi_{i}=\left(  \sum_{j=1}^{n}\pi_{j}\right)  \times\frac{E\left[
P_{i}\right]  }{\Pr\left[  P_{n+1}=0\right]  },\qquad i=1,2,\ldots,n+1.
\label{C15b}%
\end{equation}
\newline\textbf{Condition 3}: The RS scheme $\left(  \boldsymbol{\pi
},\boldsymbol{P}\right)  $ satisfies
\begin{equation}
\pi_{i}=\pi_{n+1}\times\frac{E\left[  P_{i}\right]  }{\Pr\left[
P_{n+1}=1\right]  },\qquad i=1,2,\ldots,n+1. \label{C15a}%
\end{equation}

\end{proposition}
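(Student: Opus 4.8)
The plan is to recognize that ``actuarially fair for the $n+1$ agents'' means $\pi_i = E[W_i]$ for every $i = 1, 2, \ldots, n+1$, combining the two definitions of actuarial fairness (for the participants and for the administrator). Since $W_i = \left(\sum_{j=1}^{n+1}\pi_j\right)\times P_i$ by (\ref{C1aa}), this is identical to $\pi_i = \left(\sum_{j=1}^{n+1}\pi_j\right)\times E[P_i]$ for all $i$, which is exactly Condition 1. So the equivalence between actuarial fairness and Condition 1 is immediate, and the whole task reduces to proving that Conditions 1, 2 and 3 are mutually equivalent.

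First I would show that Condition 1 implies both Condition 2 and Condition 3. Reading Condition 1 at $i = n+1$ and using $E[P_{n+1}] = \Pr[P_{n+1}=1]$ from (\ref{C31}) recovers the administrator's fairness condition (\ref{C15}), so the preceding Lemma applies and gives $\sum_{j=1}^{n+1}\pi_j = \left(\sum_{j=1}^{n}\pi_j\right)/\Pr[P_{n+1}=0] = \pi_{n+1}/\Pr[P_{n+1}=1]$. Substituting each of these two expressions for $\sum_{j=1}^{n+1}\pi_j$ into Condition 1 produces Condition 2 and Condition 3 respectively.

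For the converse directions the key move is to sum each condition over all $n+1$ agents and invoke the full allocation condition (\ref{C11a}) in expectation form, namely $\sum_{j=1}^{n+1} E[P_j] = E\left[\sum_{j=1}^{n+1}P_j\right] = 1$. Summing Condition 2 over $i = 1, \ldots, n+1$ collapses the numerator to $1$ and yields $\sum_{j=1}^{n+1}\pi_j = \left(\sum_{j=1}^{n}\pi_j\right)/\Pr[P_{n+1}=0]$; reinserting this into Condition 2 returns Condition 1. The same summation applied to Condition 3 gives $\sum_{j=1}^{n+1}\pi_j = \pi_{n+1}/\Pr[P_{n+1}=1]$, which reinserted into Condition 3 again returns Condition 1. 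All divisions are legitimate because assumption (\ref{C30}) guarantees that $\Pr[P_{n+1}=0]$ and $\Pr[P_{n+1}=1]$ are both strictly positive.

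The argument is purely algebraic, so I do not expect a genuine obstacle; the only step requiring care is noticing that one should sum Conditions 2 and 3 over \emph{all} agents and apply full allocation, since this is what extracts the total-investment identity that lets one return to Condition 1, rather than attempting to verify the coordinate equations one at a time.
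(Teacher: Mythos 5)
Your proof is correct and follows essentially the same route as the paper: both pivot on extracting the administrator's fairness identity of Lemma~1 --- either by reading a condition at $i=n+1$ or by summing over all agents and using $\sum_{j=1}^{n+1}E\left[ P_{j}\right] =1$ --- and then substituting the resulting total-investment identities to pass between the three conditions. The only difference is organizational: the paper proves the cycle Condition 1 $\Rightarrow$ Condition 2 $\Rightarrow$ Condition 3 $\Rightarrow$ Condition 1, whereas you prove the equivalences hub-and-spoke around Condition 1; the underlying algebra is identical.
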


\begin{proof}
In order to prove the implications
$$ \text{Condition 1} \Rightarrow \text{Condition 2} \Rightarrow \text{Condition 3} \Rightarrow \text{Condition 1}, $$
it suffices to prove that each of the three conditions implies actuarial fairness for the active administrator, see Lemma 1.
\newline Obviously, Condition 1 implies actuarial fairness for the active administrator.
From Condition 2, for agent $n+1$, we find that
$$ \pi_{n+1} = \left( \sum_{j=1}^n \pi_j \right) \times \frac{\Pr [P_{n+1}=1]}{\Pr [P_{n+1} = 0]}, $$
which is equivalent with actuarial fairness for the active administrator.
\newline Summing over all agents, Condition 3 leads to:
$$ \sum_{j=1}^{n+1} \pi_i = \pi_{n+1} \times \frac{1}{\Pr [P_{n+1} = 1]}, $$
which is again equivalent with actuarial fairness for the active administrator.
\newline This ends the proof. 
\end{proof}

\subsection{Risk-sharing rules and actuarial fairness}

So far, we have considered the actuarial fairness of a given \textit{RS scheme}
$\left(  \boldsymbol{\pi},\boldsymbol{P}\right)  $ with active administrator.
Let us now consider an \textit{RS rule} $\boldsymbol{P}$ with an active
administrator, which transforms any investment vector $\boldsymbol{\pi}$ into
a relative compensation vector $\boldsymbol{P}\left[  \boldsymbol{\pi}\right]
$:%
\begin{equation}
\boldsymbol{\pi}\rightarrow\boldsymbol{P}\left[  \boldsymbol{\pi}\right]
\text{.}%
\end{equation}
This means that any investment vector $\boldsymbol{\pi}$\ leads to the
compensation vector $\boldsymbol{W}\left[  \boldsymbol{\pi}\right]  $, with%
\begin{equation}
\boldsymbol{W}\left[  \boldsymbol{\pi}\right]  =\left(  \sum_{j=1}^{n+1}%
\pi_{j}\right)  \times\boldsymbol{P}\left[  \boldsymbol{\pi}\right]  \text{,}
\label{C61}%
\end{equation}
see Definition 3. Let us now consider a special type of RS rules
$\boldsymbol{P}$, which satisfy the following indifference property:
\begin{equation}
\boldsymbol{P}\left[  c\times\boldsymbol{\pi}\right]  =\boldsymbol{P}\left[
\boldsymbol{\pi}\right]  \text{,}\qquad\text{\ for any }c>0\text{ and any
investment vector }\boldsymbol{\pi}\text{.} \label{C61a}%
\end{equation}
This means that the RS rule $\boldsymbol{P}$ is such that if all participants
increase their initial investment by a constant proportion, e.g., by 20\%, then
their relative compensation vector remains unchanged, which seems to be a
reasonable property.\ Obviously any RS rule of which the relative compensation vector $\boldsymbol{P}$ is independent of the initial investment vector satisfies the indifference property (\ref{C61a}). In the following proposition, we consider contribution
vectors of an RS rule $\boldsymbol{P}$ that satisfies the indifference property
(\ref{C61a}).

\begin{proposition}
Consider the RS rule $\boldsymbol{P}$ with active administrator that
satisfies the indifference property (\ref{C61a}). Then for any positive $c$
and any investment vector $\boldsymbol{\pi}$, one has that%
\[
\mathbf{W}\left[  c\times\boldsymbol{\pi}\right]  =c\times\mathbf{W}\left[
\boldsymbol{\pi}\right]
\]

\end{proposition}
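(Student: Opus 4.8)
The plan is to unwind the definitions and combine the scaling behaviour of the relative compensation vector (supplied by the indifference property) with the defining formula \eqref{C61} for the compensation vector. The statement to prove is that $\mathbf{W}[c\times\boldsymbol{\pi}]=c\times\mathbf{W}[\boldsymbol{\pi}]$, which asserts positive homogeneity of degree one of the compensation map. This should be an essentially direct computation, so the main work is bookkeeping rather than any genuine obstacle.

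First I would write out $\mathbf{W}[c\times\boldsymbol{\pi}]$ using \eqref{C61} applied to the scaled investment vector $c\times\boldsymbol{\pi}$. By definition,
\[
\mathbf{W}[c\times\boldsymbol{\pi}]=\left(\sum_{j=1}^{n+1}c\,\pi_j\right)\times\boldsymbol{P}[c\times\boldsymbol{\pi}].
\]
Next I would apply the indifference property \eqref{C61a}, which gives $\boldsymbol{P}[c\times\boldsymbol{\pi}]=\boldsymbol{P}[\boldsymbol{\pi}]$, so the relative compensation vector is unchanged under scaling. The scalar prefactor, however, scales linearly: $\sum_{j=1}^{n+1}c\,\pi_j=c\sum_{j=1}^{n+1}\pi_j$.

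Substituting both observations, I would obtain
\[
\mathbf{W}[c\times\boldsymbol{\pi}]=c\left(\sum_{j=1}^{n+1}\pi_j\right)\times\boldsymbol{P}[\boldsymbol{\pi}]=c\times\mathbf{W}[\boldsymbol{\pi}],
\]
where the last equality is again just \eqref{C61} read for the unscaled investment vector $\boldsymbol{\pi}$. This completes the argument.

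I do not anticipate any real difficulty here: the only thing to be careful about is keeping the factor $c$ correctly attached to the summation (it scales the total fund value) while recognising that it does \emph{not} touch the relative compensation vector, precisely because the indifference property guarantees invariance of $\boldsymbol{P}$ under the rescaling. In other words, the homogeneity of $\mathbf{W}$ is an immediate consequence of the degree-zero homogeneity of $\boldsymbol{P}$ together with the degree-one homogeneity of the total investment $\sum_{j=1}^{n+1}\pi_j$, and \eqref{C61} simply multiplies the two.
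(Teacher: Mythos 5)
Your proof is correct and follows essentially the same route as the paper: apply the defining formula for $\mathbf{W}$ to the scaled vector, pull out the factor $c$ from the total investment, and invoke the indifference property to replace $\boldsymbol{P}[c\times\boldsymbol{\pi}]$ by $\boldsymbol{P}[\boldsymbol{\pi}]$. The only cosmetic difference is that the paper writes the computation componentwise for each $W_i$ while you keep vector notation throughout.
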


\begin{proof}
For any investment vector $\boldsymbol{\pi}$, any $c>0$ and any participant $i$, we
find that
\begin{align*}
W_{i}\left[  c\times\boldsymbol{\pi}\right]   &  =c\times\left(  \sum
_{j=1}^{n+1}\pi_{j}\right)  \times P_{i}\left[  c\times\boldsymbol{\pi}\right]
\\
&  =c\times\left(  \sum_{j=1}^{n+1}\pi_{j}\right)  \times P_{i}\left[
\boldsymbol{\pi}\right] \\
&  =c\times W_{i}\left[  \boldsymbol{\pi}\right]  .
\end{align*}
This ends the proof.\ \ 
\end{proof}

Suppose now that a group of participants decides to use the RS rule $\boldsymbol{P}$
with an active administrator, which satisfies the indifference property
(\ref{C61a}).\ The next question they have to answer is then what particular
initial investment vector $\boldsymbol{\pi}$ to choose.\ It may be reasonable
to select an investment vector $\boldsymbol{\pi}$ such that the RS scheme
$\left(  \boldsymbol{\pi},\boldsymbol{P}\left[  \boldsymbol{\pi}\right]
\right)  $ is actuarially fair. In the following proposition, we show that
this problem has in general no unique solution.\ 

\begin{proposition}
Consider the RS rule $\boldsymbol{P}$ with an active administrator satisfying
the indifference property (\ref{C61a}).\ Suppose that the RS scheme $\left(
\boldsymbol{\pi}^{\ast},\boldsymbol{P}\left[  \boldsymbol{\pi}^{\ast}\right]
\right)  \boldsymbol{\ }$is actuarially fair for all participants, then for any
$c>0$, also the RS scheme $\left(  c\times\boldsymbol{\pi}^{\ast
},\boldsymbol{P}\left[  c\times\boldsymbol{\pi}^{\ast}\right]  \right)  $ is
actuarially fair for all participants.\ 
\end{proposition}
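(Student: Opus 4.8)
The plan is to deduce the actuarial fairness of the scaled scheme directly from that of $\left(\boldsymbol{\pi}^{\ast},\boldsymbol{P}\left[\boldsymbol{\pi}^{\ast}\right]\right)$, leaning entirely on the positive homogeneity of the compensation vector established in the preceding proposition. First I would spell out what actuarial fairness requires for the scheme $\left(c\times\boldsymbol{\pi}^{\ast},\boldsymbol{P}\left[c\times\boldsymbol{\pi}^{\ast}\right]\right)$. Since the investment vector of this scheme is $c\times\boldsymbol{\pi}^{\ast}$, participant $i$'s initial investment equals $c\,\pi_{i}^{\ast}$, so the condition to be verified is
\[
c\,\pi_{i}^{\ast}=E\!\left[W_{i}\left[c\times\boldsymbol{\pi}^{\ast}\right]\right],\qquad i=1,2,\ldots,n.
\]

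Next I would invoke the preceding proposition, which for any RS rule satisfying the indifference property (\ref{C61a}) yields $\mathbf{W}\left[c\times\boldsymbol{\pi}^{\ast}\right]=c\times\mathbf{W}\left[\boldsymbol{\pi}^{\ast}\right]$, hence componentwise $W_{i}\left[c\times\boldsymbol{\pi}^{\ast}\right]=c\times W_{i}\left[\boldsymbol{\pi}^{\ast}\right]$. Taking expectations and using linearity gives $E\!\left[W_{i}\left[c\times\boldsymbol{\pi}^{\ast}\right]\right]=c\times E\!\left[W_{i}\left[\boldsymbol{\pi}^{\ast}\right]\right]$. The assumed actuarial fairness of $\left(\boldsymbol{\pi}^{\ast},\boldsymbol{P}\left[\boldsymbol{\pi}^{\ast}\right]\right)$ means $E\!\left[W_{i}\left[\boldsymbol{\pi}^{\ast}\right]\right]=\pi_{i}^{\ast}$ for $i=1,\ldots,n$, so the right-hand side becomes $c\,\pi_{i}^{\ast}$. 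This is exactly the condition displayed above, so the scaled scheme is actuarially fair for every participant.

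I do not expect a genuine obstacle here: the statement is an immediate corollary of the positive homogeneity of $\mathbf{W}$ combined with linearity of expectation. The only point demanding a little care is the bookkeeping, namely recognizing that the initial investment of participant $i$ in the scaled scheme is $c\,\pi_{i}^{\ast}$ rather than $\pi_{i}^{\ast}$, so that the factor $c$ appears on both sides of the fairness identity and the equality is preserved under scaling. It is worth noting that the argument uses only the indifference property and fairness for the participants; no appeal to fairness for the administrator (Proposition \ref{Prop1}) or to the specific functional form of $\boldsymbol{P}$ is required.
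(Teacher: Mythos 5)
Your proposal is correct and follows essentially the same route as the paper's own proof: invoke the positive homogeneity $\mathbf{W}\left[c\times\boldsymbol{\pi}^{\ast}\right]=c\times\mathbf{W}\left[\boldsymbol{\pi}^{\ast}\right]$ from the preceding proposition, take expectations, and substitute the fairness conditions of the original scheme. If anything, your bookkeeping is slightly tidier than the paper's, since you keep the index range at $i=1,\ldots,n$ (matching the statement, which concerns only the participants), whereas the paper writes the fairness conditions for $i=1,\ldots,n+1$ without remarking that fairness for the administrator follows from Proposition \ref{Prop1}.
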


\begin{proof}
From Proposition\ 3, we find for any participant $i$ that
\[
E\left[  W_{i}\left[  c\times\boldsymbol{\pi}^{\ast}\right]  \right]  =c\times
E\left[  W_{i}\left[  \boldsymbol{\pi}^{\ast}\right]  \right]  .
\]
The actuarial fairness of $\left(  \boldsymbol{\pi}^{\ast},\boldsymbol{P}%
\left[  \boldsymbol{\pi}^{\ast}\right]  \right)  $ can be expressed as
\[
E\left[  W_{i}\left[  \boldsymbol{\pi}^{\ast}\right]  \right]  =\pi_{i}^{\ast
},\qquad i=1,2,\ldots,n+1.
\]
Combining both expressions leads to
\[
E\left[  W_{i}\left[  c\times\boldsymbol{\pi}^{\ast}\right]  \right]
=c\times\pi_{i}^{\ast},\qquad i=1,2,\ldots,n+1,
\]
which are the actuarial fairness conditions for $\left(  c\times
\boldsymbol{\pi}^{\ast},\boldsymbol{P}\left[  c\times\boldsymbol{\pi}^{\ast
}\right]  \right)  $.
\end{proof}

We can conclude that for any RS rule $\boldsymbol{P}$ satisfying the
indifference property (\ref{C61a}), one has that if the RS scheme $\left(
\boldsymbol{\pi}^{\boldsymbol{\ast}},\boldsymbol{P}\left[  \boldsymbol{\pi
}^{\ast}\right]  \right)  $ is actuarially fair for all participants, then also
the RS scheme $\left(  c\times\boldsymbol{\pi}^{\ast},\boldsymbol{P}\left[
c\times\boldsymbol{\pi}^{\ast}\right]  \right)  $ is actuarially fair for all
participants.\ This means that for such RS rules, actuarially fair investment
vectors $\boldsymbol{\pi}$ are only defined up to a positive constant
factor.\ For RS rules satisfying the independence property (\ref{C61a}),
Proposition 2 may be a guide for choosing the appropriate actuarially fair set
of initial investments: One can either first determine the level of the
aggregate investments of the $n+1$ agents, or the level of the aggregate
investment of the $n$ participants, or the level of the individual investment
of the administrator, and then determine the individual actuarial fair
investments $\pi_{i}$ by the corresponding Conditions 1, 2 or 3.

\begin{example}
In order to illustrate the previous propositions, consider a group of $n$
participants who are exposed to the losses $X_{1},$ $X_{2},\ldots,X_{n}$,
respectively.\ Suppose they agree to use the RS rule $\boldsymbol{P}$, where
for any RS scheme $\left(  \boldsymbol{\pi},\boldsymbol{P}\left[
\boldsymbol{\pi}\right]  \right)  $, the relative compensation vector $\boldsymbol{P}%
\left[  \boldsymbol{\pi}\right]  $ is given by (\ref{C35}), as considered in
Example 1:
\[
P_{i}\left[  \boldsymbol{\pi}\right]  =\frac{X_{i}}{\sum_{k=1}^{n+1}X_{k}%
},\qquad i=1,2,\ldots,n+1,
\]
with $X_{n+1}$ defined by (\ref{C35b}). As the random losses $X_{1},$ $X_{2}%
,\ldots,X_{n}$ are assumed to be independent of $\boldsymbol{\pi}$, we have that the relative compensation vectors of
the RS rule $\boldsymbol{P}$\ satisfy the indifference property (\ref{C61a}%
).\ 
\newline Let us assume that the participants decide to choose an investment
vector $\boldsymbol{\pi}\ $such that the RS scheme $\left(  \boldsymbol{\pi
},\boldsymbol{P}\left[  \boldsymbol{\pi}\right]  \right)  $\ is actuarially
fair for any of them.\ This means that the investment vector follows from the
set of equations (\ref{C15c}):%
\begin{equation}
\pi_{i}=\left(  \sum_{j=1}^{n+1}\pi_{j}\right)  \times E\left[  \frac{X_{i}%
}{\sum_{k=1}^{n+1}X_{k}}\right]  ,\qquad i=1,2,\ldots,n+1. \label{C70}%
\end{equation}
In this case an actuarially fair investment vector is only defined up to a
positive constant factor, that is, if $\boldsymbol{\pi}^{\ast}$ is a solution of
(\ref{C70}), then also $c\times\boldsymbol{\pi}^{\ast}$ is a solution, for any
$c>0$. Applying this RS rule in practice, we could first determine a reference
solution $\boldsymbol{\pi}$ of (\ref{C70}), and in a second step, determine
the investment vector $c\times\boldsymbol{\pi}$ with $c$ such that
$\Pr\left[  c\times\sum_{j=1}^{n+1}\pi_{j}>\sum_{j=1}^{n+1}X_{k}\right]  $
is sufficiently large.\ As an extreme case, suppose that $c$ is chosen
such that $\Pr\left[  c\times\sum_{j=1}^{n+1}\pi_{j}>\sum_{j=1}^{n+1}%
X_{k}\right]  =1$, then we have that
\[
W_{i}=\left(  c\times\sum_{j=1}^{n+1}\pi_{j}\right)  \times\frac{X_{i}%
}{\sum_{k=1}^{n+1}X_{k}}\geq X_{i},\qquad i=1,2,\ldots,n+1,
\]
which means that each compensation $W_{i}$ is always larger than its
corresponding claim $X_{i}$ in this case.\ \hfill$\lhd$
\end{example}

\begin{example}
Consider the setting of Example 2 with the number of protection units $f_{i}%
$\ chosen to be equal to the initial investment $\pi_{i}$. This means that we
consider the RS rule $\boldsymbol{P}$, where for any RS scheme $\left(
\boldsymbol{\pi},\boldsymbol{P}\left[  \boldsymbol{\pi}\right]  \right)  $,
each participant and the administrator receive a level of compensation
proportional to their initial contribution in case the predefined event
occurs:
\begin{equation}
W_{i}\left[  \boldsymbol{\pi}\right]  =\left(  \sum_{j=1}^{n+1}\pi_{j}\right)
\times\frac{\pi_{i}\times I_{i}}{\sum_{j=1}^{n+1}\pi_{j}\times I_{j}},\qquad
i=1,2,\ldots,n+1. \label{C36}%
\end{equation}
In these equations, the $I_i$ are Bernouilli-distributed random variables with $ \Pr \left[ I_i=1 \right]=p_i$ and $\Pr \left[ I_i=0 \right]=q_i]$. 

We call this rule that was proposed in Tavin (2023) the Tavin RS rule, see
(\ref{C91}). Obviously, the RS rule $\boldsymbol{P}$ satisfies the
indifference property (\ref{C61a}), implying that actuarially fair investments
vectors are only determined up to a positive constant factor.\ From
(\ref{C36}) it follows that the actuarial fairness conditions for all participants
in this RS scheme $\left(  \boldsymbol{\pi},\boldsymbol{P}\left[
\boldsymbol{\pi}\right]  \right)  $ can be written as follows:%
\[
1=\left(  \sum_{j=1}^{n+1}\pi_{j}\right)  \times E\left[  \frac{I_{i}}
{\sum_{j=1}^{n+1}\pi_{j}\times I_{j}}\right]  ,\qquad i=1,2,\ldots,n+1.
\]
Obviously, for actuarially fair initial investments the expectations $E\left[  \frac{I_{i}%
}{\sum_{j=1}^{n+1}\pi_{j}\times I_{j}}\right]  \ $ have to be equal for all agents.\ It is important to note  the fact that an RS scheme which is not actuarially fair is not necessarily a
`wrong' choice.\ Suppose that two participants of different age each pay the
same initial investment.\ Then, upon survival, they will receive the same
compensation.\ However, the one with the higher survival probability (the
younger one, let's say) will more likely survive and hence, is favored.\ As
clearly discussed in Tavin (2023), this RS scheme accommodates a reallocation
of wealth that is favorable to those who are likely to survive longer.\ Such
a reallocation can be a valuable objective, for example, in a given small community of
family members.
\end{example}

\begin{example}
Consider the Tavin RS rule $\boldsymbol{P}$ based on
the RS schemes $\left(  \boldsymbol{\pi},\boldsymbol{P}\left[  \boldsymbol{\pi
}\right]  \right)  $ of Example 7, with two participants and an active administrator.

From (\ref{C36}) it follows that the actuarial fairness conditions for all participants can be written as follows:%
\begin{equation*}
    1 = (\pi_1+\pi_2+\pi_3) \times \mathbb{E} \left[ \frac{I_i}{\pi_1 I_1+\pi_2 I_2+\pi_3 I_3} \right], \quad i = 1, 2, 3.
\end{equation*}
Let us assume that $I_1$ and $I_2$ are mutually independent. In this setting, we find that
\begin{equation*}
     \mathbb{E} \left[ \frac{I_1}{\pi_1 I_1+\pi_2 I_2+\pi_3 I_3} \right] = \left( \frac{p_1 p_2}{\pi_1 + \pi_2} + \frac{p_1 q_2}{\pi_1} \right),
\end{equation*}
while
\begin{equation*}
     \mathbb{E} \left[ \frac{I_2}{\pi_1 I_1+\pi_2 I_2+\pi_3 I_3} \right] = \left( \frac{p_1 p_2}{\pi_1 + \pi_2} + \frac{p_2 q_1}{\pi_2} \right)
\end{equation*}
and 
\begin{equation*}
     \mathbb{E} \left[ \frac{I_3}{\pi_1 I_1+\pi_2 I_2+\pi_3 I_3} \right] =  \frac{q_1 q_2}{\pi_3}.
\end{equation*}
This results in the following system of equations which characterizes the set of all initial investments $\left( \pi_1, \pi_2, \pi_3 \right)$ which leads to an actuarially fair Tavin RS scheme (\ref{C36})

\begin{empheq}[left=\empheqlbrace]{align}
  1 &= (\pi_1+\pi_2+\pi_3) \times \left( \frac{p_1 p_2}{\pi_1 + \pi_2} + \frac{p_1 q_2}{\pi_1} \right)  \label{First_condition_TVRS_example}\\
  1 &= (\pi_1+\pi_2+\pi_3) \times \left( \frac{p_1 p_2}{\pi_1 + \pi_2} + \frac{p_2 q_1}{\pi_2} \right) \label{Second_condition_TVRS_example}\\
  1 &= (\pi_1+\pi_2+\pi_3) \times \left( \frac{q_1 q_2}{\pi_3} \right). \label{Third_condition_TVRS_example}
\end{empheq}
From (\ref{First_condition_TVRS_example}) and (\ref{Second_condition_TVRS_example}), we find that

\begin{equation} \label{First_derivation_TVRS_example}
    \pi_2 = \frac{p_2 q_1}{p_1 q_2} \pi_1,
\end{equation}
while equation (\ref{Third_condition_TVRS_example}) leads to

\begin{equation} \label{Second_derivation_TVRS_example}
    \pi_1 + \pi_2 = \pi_3 \times \frac{1-q_1q_2}{q_1q_2}.
\end{equation}
Substituting (\ref{First_derivation_TVRS_example}) in (\ref{Second_derivation_TVRS_example}) gives rise to

\begin{equation} \label{First_result_TVRS_example}
    \pi_1 = \pi_3 \times \frac{p_1}{q_1} \times \frac{1-q_1q_2}{p_1q_2 + p_2q_1},
\end{equation}
and also

\begin{equation} \label{Second_result_TVRS_example}
    \pi_2 = \pi_3 \times \frac{p_2}{q_2} \times \frac{1-q_1q_2}{p_1q_2 + p_2q_1}.
\end{equation}

It is easy to verify that any $\left( \pi_1, \pi_2, \pi_3 \right)$ with $\pi_3 >0$ and $\pi_1$ and $\pi_2$ given by (\ref{First_result_TVRS_example}) and (\ref{Second_result_TVRS_example}), respectively, satisfies the system of equations (\ref{First_condition_TVRS_example}), (\ref{Second_condition_TVRS_example}) and (\ref{Third_condition_TVRS_example}). We conclude that the set of all initial investments $\left( \pi_1, \pi_2, \pi_3 \right)$ that lead to an actuarially fair Tavin RS scheme (\ref{C36}) with two participants is characterized by (\ref{First_result_TVRS_example}) and (\ref{Second_result_TVRS_example}), where $\pi_3$ can be any positive real number.
\end{example}

\begin{example}
    Consider a group of $2$ participants. Each of them may experience a loss $X_i$ in the observation period $\left[0,1\right]  $. Define the loss faced by participant 1 by
    \[
    X_1 =
    \begin{cases}
    0, & \text{ with probability } q_1,\\[0.2cm]
    1, &  \text{ with probability } d_1,\\[0.2cm]
    2, & \text{ with probability } p_1=1-q_1-d_1.
    \end{cases}
    \]
    Similarly, define the loss faced by participant two by:
    \[
    X_2 =
    \begin{cases}
    0, & \text{ with probability } q_2,\\[0.2cm]
    1, & \text{ with probability } p_2=1-q_2.
    \end{cases}
    \]
    \noindent Then define the random variable for the active administrator:
    \[
    X_{3} = \mathbf{1}\left( \sum_{j=1}^{2} X_j = 0 \right).
    \]
    Consider the Tavin RS rule $\boldsymbol{P}$ based on the RS schemes $\left(  \boldsymbol{\pi},\boldsymbol{P}\left[  \boldsymbol{\pi}\right]  \right)  $, with two participants and an active administrator. From Example 8, it follows that the actuarial fairness conditions for all participants can be written as follows:
    \begin{equation*}
    1 = (\pi_1+\pi_2+\pi_3) \times \mathbb{E} \left[ \frac{X_i}{\pi_1 X_1+\pi_2 X_2+\pi_3 X_3} \right], \quad i = 1, 2, 3.
    \end{equation*}
    Let us assume that $X_1$ and $X_2$ are mutually independent. In this setting, we find that
    \begin{equation*}
         \mathbb{E} \left[ \frac{X_1}{\pi_1 X_1+\pi_2 X_2+\pi_3 X_3} \right] = 
         \left(\frac{(p_1+d_1)q_2}{\pi_1} + \frac{d_1p_2}{\pi_1+\pi_2} + \frac{2p_1p_2}{2\pi_1+\pi_2} \right),
    \end{equation*}
    while
    \begin{equation*}
         \mathbb{E} \left[ \frac{X_2}{\pi_1 X_1+\pi_2 X_2+\pi_3 X_3} \right] = \left( \frac{p_2 q_1}{\pi_2} + \frac{p_2d_1}{\pi_1 + \pi_2} + \frac{p_2p_1}{2\pi_1 +\pi_2} \right)
    \end{equation*}
    and 
    \begin{equation*}
         \mathbb{E} \left[ \frac{X_3}{\pi_1 X_1+\pi_2 X_2+\pi_3 X_3} \right] =  \frac{q_1 q_2}{\pi_3}.
    \end{equation*}
    This results in the following system of equations which characterizes the set of all initial investments $\left( \pi_1, \pi_2, \pi_3 \right)$ which leads to an actuarially fair Tavin RS scheme. 
    \begin{empheq}[left=\empheqlbrace]{align}
      1 &= (\pi_1+\pi_2+\pi_3) \times  \left(\frac{(p_1+d_1)q_2}{\pi_1} + \frac{d_1p_2}{\pi_1+\pi_2} + \frac{2p_1p_2}{2\pi_1+\pi_2} \right)\\
      1 &= (\pi_1+\pi_2+\pi_3) \times \left( \frac{p_2 q_1}{\pi_2} + \frac{p_2d_1}{\pi_1 + \pi_2} + \frac{p_2p_1}{2\pi_1 +\pi_2} \right)\\
      1 &= (\pi_1+\pi_2+\pi_3) \times \left( \frac{q_1 q_2}{\pi_3} \right). 
      \label{}
    \end{empheq}
    Let
    \begin{equation}
        \Delta = \sqrt{\bigl[\,2(d_1+p_1)q_2 + p_1 p_2 - p_2 q_1\,\bigr]^{2} \,+\, 8\,p_2\, q_1\, (d_1+p_1)\, q_2}.
    \end{equation}    
    We find that
    \begin{align}
        \pi_2 &= \frac{\Delta \,-\, \bigl[\,2(d_1+p_1)q_2 + p_1 p_2 - p_2 q_1\,\bigr]}{2\,(d_1+p_1)\,q_2}\;\pi_1, \\
        \pi_1 &= \frac{\Delta \,+\, \bigl[\,2(d_1+p_1)q_2 + p_1 p_2 - p_2 q_1\,\bigr]}{4\,p_2\,q_1}\;\pi_2.
    \end{align}
    and so
    \begin{align} \label{First_derivation_TVRS_second_example}
        \pi_1 &= \frac{(1-q_1 q_2)\,\bigl[\,\Delta \,+\, p_2(p_1 - q_1)\,\bigr]}{2\,q_1 q_2\,\bigl(d_1 q_2 + p_1 q_2 + p_1 p_2 + p_2 q_1\bigr)}\;\pi_3,\\
    \end{align}
    \begin{align} \label{Second_derivation_TVRS_second_example}
        \pi_2 &= \frac{(1-q_1 q_2)\,\bigl[\,2(d_1+p_1)q_2 + p_1 p_2 + 3 p_2 q_1 \,-\, \Delta\,\bigr]}{2\,q_1 q_2\,\bigl(d_1 q_2 + p_1 q_2 + p_1 p_2 + p_2 q_1\bigr)}\;\pi_3.
    \end{align}
We conclude that $(\pi_1, \pi_2, \pi_3)$ given by \eqref{First_derivation_TVRS_second_example} and \eqref{Second_derivation_TVRS_second_example}, where $\pi_3$ can be any positive real number, is an actuarially fair Tavin RS scheme with the two participants described above.
\end{example}

\begin{example}
Consider the RS rule $\boldsymbol{P}$ based on
the RS schemes $\left(  \boldsymbol{\pi},\boldsymbol{P}\left[  \boldsymbol{\pi
}\right]  \right)  $ of Example 2, with constant number of protection units for each participant:
\[
f_{i}=1,\qquad i=1,2,\ldots,n+1.
\]
This means that the RS rule $\boldsymbol{P}$ is such that for any RS scheme
$\left(  \boldsymbol{\pi},\boldsymbol{P}\left[  \boldsymbol{\pi}\right]
\right)  $, the relative compensation vector $\boldsymbol{P}\left[  \boldsymbol{\pi
}\right]  $ is given by
\begin{equation}
P_{i}\left[  \boldsymbol{\pi}\right]  =\frac{I_{i}}{\sum_{j=1}^{n+1}I_{j}%
},\qquad i=1,2,\ldots,n+1.
\end{equation}
Obviously, $\boldsymbol{P}$ satisfies the indifference property (\ref{C61a}).
The RS scheme $\left(  \boldsymbol{\pi},\boldsymbol{P}\left[  \boldsymbol{\pi
}\right]  \right)  $ is actuarially fair for all participants if and only if
the conditions (\ref{C14}) are satisfied.\ Inspired by the approach proposed
in Denuit and Robert (2025), one can verify that the expected proportions
follow from%
\begin{equation}
E\left[  P_{i}\left[  \boldsymbol{\pi}\right]  \right]  =\sum_{k=1}^{n}%
\frac{1}{k}\Pr\left[  I_{i}=1,\text{ }\sum_{j=1}^{n}I_{j}=k\right]  ,\qquad
i=1,2,\ldots,n. \label{C53}%
\end{equation}
In case the indicator variables $I_{i}$ of the $n$ participants are mutually independent, the
expressions (\ref{C53}) can be transformed into%
\begin{equation}
E\left[  P_{i}\left[  \boldsymbol{\pi}\right]  \right]  =\Pr\left[
I_{i}=1\right]  \sum_{k=1}^{n}\frac{1}{k}\Pr\left[  \sum_{j=1}^{n}I_{j}%
-I_{i}=k-1\right]  \qquad i=1,2,\ldots,n. \label{C56}%
\end{equation}
In this case, the expectations of the relative compensations $P_{i}\left[
\boldsymbol{\pi}\right]  $ follow from probabilities of events related to sums
of independent Bernoulli random variables. The actuarially fair initial investments of
this RS scheme follow then from Proposition 2 and from (\ref{C56}).\ Notice
that for this RS rule, the actuarially fair initial investments are only
defined up to a positive constant factor.\ \newline The calculations above can
in a straightforward way be generalized to the case that all $f_{i}$ are
positive (not necessary equal) integers, rather than all equally 1.\ We refer
to Denuit and Robert (2025) for more details on this case.\ \hfill$\lhd$
\end{example}

\begin{example} \label{example:gamma_dirichlet_act_fair}
    Consider the RS rule $\boldsymbol{P}$ based on the RS scheme $(\pi, P[\pi])$ of Example \ref{example:continuous} with $n$ participants and an active administrator. Then we have that for $i \in\{1,\ldots,n\}$,
        $$ E[P_i]=p_i  \left(\sum_{j=1}^{n+1} \pi_j\right) \cdot \left( \sum_{\substack{S \subseteq \{1,\ldots,n\} \\ i \in S}} \frac{\alpha_i}{\sum_{j \in S} \alpha_j} \prod_{j \in S \setminus \{i\}} p_j \prod_{j \in \{1,\ldots,n\} \setminus S} q_j \right) $$
    since $ E[P_i \mid I_i = 1, S = s] = \frac{\alpha_i}{\sum_{j\in S} \alpha_j} $ for the claimant set $S$, where the notation $S \subseteq \{1,\ldots,n\}$ refers to any subset $S$ of $\{1,\ldots,n\}$, and the summation notation $\sum_{\substack{S \subseteq \{1,\ldots,n\} \\ i \in S}}$ ranges over all subsets $S$ of $\{1,\ldots,n\}$ containing $i$. From (\ref{C36}) it then follows that the actuarial fairness conditions for all participants can be written as follows: 
    \begin{align}
        \pi_i &= p_i  \left(\sum_{j=1}^{n+1} \pi_j\right) \cdot \left( \sum_{\substack{S \subseteq \{1,\ldots,n\} \\ i \in S}} \frac{\alpha_i}{\sum_{j \in S} \alpha_j} \prod_{j \in S \setminus \{i\}} p_j \prod_{j \in \{1,\ldots,n\} \setminus S} q_j \right), \qquad i = 1, \ldots, n, \label{eq:fair-i}\\
        \pi_{n+1} &= \left( \sum_{j=1}^{n+1} \pi_j \right) \cdot \left( \prod_{j=1}^n q_j \right). \label{eq:fair-admin}
    \end{align}
    This means that in Example \ref{example:continuous}, there is a closed form solution for the investments of all agents in the RS scheme.
\end{example}

\section{Compensation-based risk-sharing with a passive administrator}

\subsection{Introducing the passive administrator}

Consider an RS scheme $\left(  \boldsymbol{\pi},\boldsymbol{P}\right)  $ with
$n$ participants and an active administrator as defined above.\ The time $0$
investments of the $n+1$ agents are summarized in the investment vector
$\boldsymbol{\pi}=\left(  \pi_{1},\pi_{2},\ldots,\pi_{n+1}\right)  $, while
the relative compensation vector is given by $\boldsymbol{P}=\left(
P_{1},P_{2},\ldots,P_{n+1}\right)  $. The compensations are summarized in
the compensation vector $\boldsymbol{W}=\left(  W_{1},W_{2},\ldots
,W_{n+1}\right)  $.\ The latter vector is given by
\begin{equation}
W_{i}=\left(  \sum_{j=1}^{n+1}\pi_{j}\right)  \times P_{i}\qquad
i=1,2,\ldots,n+1. \label{C100}%
\end{equation}
This set up of a compensation-based RS scheme with an active administrator is
a generalization of the set up in Dhaene and Milevsky (2024), who consider
the RS schemes related to tontine funds, as described in Example 2. 

Denuit and Robert (2025) also consider the setting of Dhaene and Milevsky
(2024), but with a passive administrator who does not join the group of
participants in investing. Instead, the administrator is present only to oversee participants' compliance and facilitate smooth operation of the fund. Hence, these authors assume that $\pi_{n+1}=0$,
implying that (without any adaptation of the compensations), such an RS scheme
can never be actuarially fair, see Remark 1. In order to
include the possibility that the RS scheme is actuarially fair for all
participants, Denuit and Robert (2025) assume that in case no participant
receives a positive compensation, that is, in case $\sum_{j=1}^{n}P_{j}=0$, or
equivalently, $P_{n+1}=1$, each participant $i$ receives back his original
investment. Most papers on tontine funds in the last few years have added this
element to repair expectations and hence, actuarial
fairness. Hereafter, we will generalize this approach considered in Denuit
and Robert (2025), in the same way as we generalized the approach of Dhaene
and Milevsky (2024) in the first part of this paper. That is, we will consider
risk-sharing with a passive administrator for a general class of
compensations, instead of restricting to the tontine fund case of Example
2.\ Let us start by defining a compensation-based RS scheme with a passive administrator.\ \ 

\begin{definition}
A compensation-based risk-sharing scheme for a given group of $n$ participants
and a passive administrator is a pair $\left(  \boldsymbol{\pi},\boldsymbol{P}%
\right)  $, where $\boldsymbol{\pi}=\left(  \pi_{1},\pi_{2},\ldots,\pi
_{n+1}\right)  $ is the initial investment vector with $\pi_{n+1}=0$. Further, $\boldsymbol{P}%
=\left(  P_{1},P_{2},\ldots,P_{n},P_{n+1}\right)  $ is the relative compensation
vector, that is $\boldsymbol{P}\in\mathcal{R}_{n+1}$ defined in (\ref{C6d}), and where the
compensations attributed to the $n$ participants follow from the compensation
vector $\boldsymbol{W}=\left(  W_{1},W_{2},\ldots,W_{n+1}\right)  $,
which is defined by
\begin{equation}
W_{i}= \left(  \sum_{j=1}^{n}\pi_{j}\right)  \times P_{i}+\pi_{i}\times P_{n+1} \qquad :i=1,2,\ldots,n,
\label{C1a}%
\end{equation}
whereas $W_{n+1}=0$.
\label{def6}
\end{definition}

From this definition, we see that an RS scheme $\left(  \boldsymbol{\pi
},\boldsymbol{P}\right)  $ with a passive administrator is one where each
participant $i$ invests an an amount $\pi_{i}$ and receives a relative compensation
$P_{i}$ of the available fund at time $1$, whereas the administrator makes an
investment $\pi_{n+1}=0$ and receives a zero-compensation $W_{n+1}=0$ at time
$1$.\ Moreover, in case the relative compensations $P_{i}$ of all participants
are $0$, or equivalently, $P_{n+1}=1$, every participant receives his original
investment $\pi_{i}$ back.\ The administrator is `passive' in the sense that
he is not involved in investing and receiving any amount of compensation based on the participant's initial investments.\ 

In order to be able to clearly distinguish between the 'active' and the 'passive' administrator case, see (\ref{C100}) and (\ref{C1a}), hereafter we will sometimes write $W_i^{\text{active}}$ or $W_i^{\text{passive}}$ instead of $W_i$ for the contribution of agent $i$.

Taking into account that $\sum_{j=1}^{n+1}P_{j}=1$, it is a straightforward
exercise to prove the following full allocation property for a
compensation-based RS scheme with a passive administrator:
\begin{equation}
\sum_{i=1}^{n}W_{i}=\sum_{j=1}^{n}\pi_{j}.
\end{equation}
This means that the total amount of compensations paid at time $1$ is exactly equal
to the fund value at that time. Hereafter, when considering an `RS scheme
$\left(  \boldsymbol{\pi},\boldsymbol{P}\right)  $ with a passive
administrator', we mean a `compensation-based RS scheme with a passive administrator'.\ 

\begin{example}
Denuit and Robert (2025) and Dhaene and Milevsky (2024) discuss the
above-mentioned approach with a passive administrator for the setting of
Example 2, that is, in the framework of a tontine fund with $\pi_{n+1}=0$. In particular, they
consider the RS scheme $\left(  \boldsymbol{\pi},\boldsymbol{P}\right)  $ with
a passive administrator, where the compensation vector $\boldsymbol{W}$ is
defined by (\ref{C1a}), with relative compensation vector $\boldsymbol{P}$
given by%
\begin{equation}
P_{i}=\frac{f_{i}\times I_{i}}{\sum_{j=1}^{n+1}f_{j}\times I_{j}},\qquad
i=1,2,\ldots,n+1,
\end{equation}
where the protection units $f_{i}$ and the indicator variables $I_{i}$ are
defined in Example 2. Dhaene and Milevsky (2024) write the following about
this approach: `While the above-mentioned approach (i.e., without an (active)
administrator, and returning back the investments if no participant
receives a positive compensation) resolves the mathematical problem, we
believe that this isn't why people buy tontines.\ Indeed, it violates the
spirit of the (historical) tontine in which all rights and ownership benefits
are lost at death.\ Furthermore, some members may not have any beneficiaries,
leading to yet another unintended redistribution of wealth.\ In extreme cases,
when there is only one person surviving, this may create a moral hazard.\ In
other words, and for many reasons, while adding a death benefit refund or
payout `solves' the math, it `ruins' the elegance of the tontine ideal.'\ As
far as we are aware, Dhaene and Milevsky (2024) are the first who introduce an
active tontine administrator as both a technical and real-world solution to
some of the above-mentioned issues, instead of artificially adding legacy or
bequest payouts to participants.\ 
\end{example}

\subsection{Actuarially fair risk-sharing schemes with a passive administrator}
Similar to actuarial fairness for an RS scheme with an active administrator,
the RS scheme $\left(  \boldsymbol{\pi},\text{ }\boldsymbol{P}\right)  $ with
a passive administrator is said to be actuarially fair for the participants if
the time $1$ value of each participant's initial investment is equal to the
expected value of the compensation he will receive at time $1$.\ 

\begin{definition}
The RS scheme $\left(  \boldsymbol{\pi},\boldsymbol{P}\right)  $ with a
passive administrator is actuarially fair for all participants if the following
conditions hold:
\begin{equation}
\pi_{i}=E\left[  W_{i}\right]  ,\qquad i=1,2,\ldots,n. \label{C45c}%
\end{equation}

\end{definition}

Taking into account that $\pi_{n+1} = W_{n+1}=0$, we find that the RS scheme is always actuarially fair for the passive administrator.

In the next proposition, another characterization is given for the actuarial
fairness of an RS scheme with a passive administrator.\ 

\begin{proposition}
The RS scheme $\left(  \boldsymbol{\pi},\boldsymbol{P}\right)  $ with a
passive administrator is actuarially fair for its $n$ participants if and only if any of the following conditions is satisfied:
\vspace{1mm}
\newline\textbf{Condition 1}: The RS scheme $\left(  \boldsymbol{\pi
},\boldsymbol{P}\right)  $ satisfies%
\begin{equation} 
\pi_{i}=\left(  \sum_{j=1}^{n}\pi_{j}\right)  \times E\left[
P_{i}\right] + \pi_i \times E[P_{n+1}] ,\qquad i=1,2,\ldots,n.
\label{C45e}%
\end{equation}
\newline\textbf{Condition 2}: The RS scheme $\left(  \boldsymbol{\pi},\boldsymbol{P}\right)  $ satisfies
\begin{equation}
\pi_{i}=\left(  \sum_{j=1}^{n}\pi_{j}\right)  \times\frac{E\left[  P_{i}\right]}
{\Pr\left[  P_{n+1}=0\right]  }, \qquad i=1,2,\ldots,n.
\label{C45d}%
\end{equation}
\end{proposition}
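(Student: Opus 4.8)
The plan is to prove the two equivalences by direct computation, using the defining formula (\ref{C1a}) for the compensations under a passive administrator together with the Bernoulli structure of $P_{n+1}$.

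First I would establish that actuarial fairness is equivalent to Condition 1. Taking expectations on both sides of the compensation formula (\ref{C1a}), namely $W_i = \left(\sum_{j=1}^n \pi_j\right) P_i + \pi_i P_{n+1}$, and using linearity of expectation together with the fact that the investments $\pi_i$ and the aggregate $\sum_{j=1}^n \pi_j$ are deterministic (hence factor out of the expectation), I obtain
\begin{equation*}
E\left[W_i\right] = \left(\sum_{j=1}^n \pi_j\right) E\left[P_i\right] + \pi_i\, E\left[P_{n+1}\right], \qquad i=1,2,\ldots,n.
\end{equation*}
Comparing this with the actuarial fairness condition (\ref{C45c}), namely $\pi_i = E[W_i]$, shows immediately that actuarial fairness for all participants holds if and only if (\ref{C45e}) holds. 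This half of the argument is purely a matter of taking expectations.

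Next I would show that Condition 1 is equivalent to Condition 2 by an algebraic rearrangement. Starting from (\ref{C45e}) and collecting the two terms involving $\pi_i$ on the left gives $\pi_i\bigl(1 - E[P_{n+1}]\bigr) = \left(\sum_{j=1}^n \pi_j\right) E[P_i]$. The key observation is that $P_{n+1}$ is Bernoulli distributed by (\ref{C6c}), so $E[P_{n+1}] = \Pr[P_{n+1}=1]$ and therefore $1 - E[P_{n+1}] = \Pr[P_{n+1}=0]$. Substituting yields $\pi_i \Pr[P_{n+1}=0] = \left(\sum_{j=1}^n \pi_j\right) E[P_i]$, and since the standing assumption (\ref{C30}) guarantees $\Pr[P_{n+1}=0] > 0$, I may divide both sides by this quantity to arrive at (\ref{C45d}). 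As every step is reversible, the two conditions are equivalent, which closes the cycle.

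The result is essentially a routine calculation, so I do not expect any substantial obstacle. The only points deserving care are the conversion of $1 - E[P_{n+1}]$ into $\Pr[P_{n+1}=0]$, which relies on $P_{n+1}$ being an indicator (equation (\ref{C6c})), and the invocation of the positivity assumption $\Pr[P_{n+1}=0] > 0$ from (\ref{C30}) to legitimise the division that produces Condition 2.
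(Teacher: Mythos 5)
Your proposal is correct and follows essentially the same route as the paper: identify Condition 1 as the expectation of the passive-administrator compensation formula (\ref{C1a}), then use the Bernoulli structure of $P_{n+1}$ and the positivity of $\Pr[P_{n+1}=0]$ to rearrange Condition 1 into Condition 2. The paper's proof is merely a terser version of the same argument, citing (\ref{C31}) to package the indicator identity and positivity that you invoke via (\ref{C6c}) and (\ref{C30}).
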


\begin{proof}
Actuarial fairness for all participants is, by definition, equivalent to Condition
1. Taking into account (\ref{C31}), the actuarial fairness
conditions (\ref{C45e}) can be transformed into the expressions
(\ref{C45d}).
\end{proof}

In the following corollary, we show that under appropriate conditions, actuarially fair investments $\pi_i$ of the participants in a RS scheme $(\boldsymbol{\pi},\boldsymbol{P})$ with an active administrator are also actuarially fair in the corresponding RS scheme $(\boldsymbol{\pi},\boldsymbol{P})$ with a passive administrator.

\begin{corollary}
     Consider the RS scheme $\left( (\pi_1, \ldots,\pi_n,\pi_{n+1}), \boldsymbol{P} \right)$ with an active administrator (as defined in $(\ref{C100})$) and the RS scheme $\left( (\pi_1, \ldots,\pi_n,0), \boldsymbol{P} \right)$ with a passive administrator (as defined in $(\ref{C1a})$).  Then we have that the following statements are equivalent:
    \begin{enumerate}[label=(\alph*)]
        \item $ \pi_i =  E[{W_i^{\text{active}}}], \hspace{2mm} i=1,2,\ldots,n+1 $
        \vspace{2mm}
        \item $ \pi_i = E[{W_i^{\text{passive}}}], \hspace{2mm} i=1,2,\ldots, n$ \hspace{2mm} and \hspace{2mm} $ \pi_{n+1} = E[W_{n+1}^{\text{active}}]. $
    \end{enumerate}
\end{corollary}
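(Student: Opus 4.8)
The plan is to prove the two implications $(a)\Rightarrow(b)$ and $(b)\Rightarrow(a)$ separately, using as the single bridge between the active and passive frameworks the identity from Lemma 1, which states that whenever the scheme is actuarially fair for the active administrator one has $\sum_{j=1}^{n+1}\pi_j=\frac{1}{\Pr[P_{n+1}=0]}\sum_{j=1}^{n}\pi_j$. Writing $S=\sum_{j=1}^{n+1}\pi_j$ and $S_n=\sum_{j=1}^{n}\pi_j$, and recalling $E[P_{n+1}]=\Pr[P_{n+1}=1]$ from (\ref{C31}), the two expected compensations to a participant $i$ read
\[
E[W_i^{\text{active}}]=S\,E[P_i],\qquad
E[W_i^{\text{passive}}]=S_n\,E[P_i]+\pi_i\,\Pr[P_{n+1}=1],
\]
directly from (\ref{C100}) and (\ref{C1a}). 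The whole argument then consists of substituting the Lemma 1 relation into these two expressions and invoking $\Pr[P_{n+1}=0]+\Pr[P_{n+1}=1]=1$.

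For $(a)\Rightarrow(b)$ I would first apply Proposition \ref{Prop1}: active fairness for the $n$ participants already forces $\pi_{n+1}=E[W_{n+1}^{\text{active}}]$, which is precisely the second assertion in $(b)$ and, via Lemma 1, gives $S=S_n/\Pr[P_{n+1}=0]$. I then substitute $\pi_i=E[W_i^{\text{active}}]=S\,E[P_i]$ into the passive expression to obtain $E[W_i^{\text{passive}}]=E[P_i]\bigl(S_n+S\,\Pr[P_{n+1}=1]\bigr)$. Using $S_n=S\,\Pr[P_{n+1}=0]$, the bracket collapses to $S\bigl(\Pr[P_{n+1}=0]+\Pr[P_{n+1}=1]\bigr)=S$, so that $E[W_i^{\text{passive}}]=S\,E[P_i]=\pi_i$, establishing the first assertion in $(b)$.

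For $(b)\Rightarrow(a)$ the administrator's fairness $\pi_{n+1}=E[W_{n+1}^{\text{active}}]$ is now a hypothesis rather than a consequence, and it is exactly what is needed to apply Lemma 1 and obtain $S=S_n/\Pr[P_{n+1}=0]$ once more. Passive fairness $\pi_i=S_n\,E[P_i]+\pi_i\,\Pr[P_{n+1}=1]$ rearranges to $\pi_i\,\Pr[P_{n+1}=0]=S_n\,E[P_i]$, which is the form (\ref{C45d}); dividing by $\Pr[P_{n+1}=0]$ and replacing $S_n/\Pr[P_{n+1}=0]$ by $S$ yields $\pi_i=S\,E[P_i]=E[W_i^{\text{active}}]$, which is $(a)$.

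The only genuine subtlety, and the step I expect to carry the argument, is recognizing that the additional bequest term $\pi_i\,P_{n+1}$ appearing in $W_i^{\text{passive}}$ but not in $W_i^{\text{active}}$ is absorbed precisely because Lemma 1 ties the two aggregate investments together through $\Pr[P_{n+1}=0]$; once that relation is in force, the identity $\Pr[P_{n+1}=0]+\Pr[P_{n+1}=1]=1$ makes the passive and active expected compensations coincide term by term. Care is also needed that the administrator-fairness hypothesis on $\pi_{n+1}$ is genuinely available in both directions: it arises automatically from Proposition \ref{Prop1} in $(a)\Rightarrow(b)$, but must be carried as the explicit second assumption of $(b)$ in the reverse direction.
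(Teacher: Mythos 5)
Your proof is correct and follows essentially the same route as the paper's: both rest on Proposition 1 together with the aggregate-investment identity of Lemma 1 (which is also what underlies Condition 2 of Proposition 3) on the active side, and on the rearrangement $\pi_i\,\Pr[P_{n+1}=0]=\bigl(\sum_{j=1}^{n}\pi_j\bigr)E[P_i]$ (Condition 2 of Proposition 6) on the passive side. The only difference is presentational: you split the equivalence into two explicit implications and inline the algebra, whereas the paper states it as a chain of equivalences citing Propositions 3 and 6.
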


\begin{proof}
    Taking into account Proposition 1 and Condition 2 in Proposition 2, we can rewrite the equations $(a)$ as follows:
    $$ \pi_i = \left( \sum_{j=1}^n \pi_j \right) \times \frac{E[P_i]}{\Pr[P_{n+1}=0]}, \qquad i=1,2,\ldots,n  $$
    and
    $$ \pi_{n+1} = \left( \sum_{j=1}^{n+1} \pi_j \right) \times  E[P_{n+1}]. $$
    From Condition 2 in Proposition 5, we can conclude the equations in $(a)$ are equivalent to the equations in $(b)$.

\end{proof}

Notice that the only if statement of corollary above can also be proven by rewriting (\ref{C100}) and (\ref{C1a}) as 
$$ W^{\text{active}}_i =  \left( \sum_{j=1}^{n} \pi_j \right) \times P_i + \pi_{n+1} \times P_i, \qquad i=1,2,\ldots,n+1 $$
and 
$$ W^{\text{passive}}_i =  \left( \sum_{j=1}^{n} \pi_j \right) \times P_i + \pi_{i} \times P_{n+1}, \qquad i=1,2,\ldots,n. $$

When the participants' contributions are actuarially fair in case of an active administrator, then from Condition 3 of Proposition 2, we have that

$$ \pi_{n+1} \times E[P_i] = \pi_i \times E[P_{n+1}], \qquad i=1,\ldots,n+1. $$
This implies that 
$$ E[W^{\text{active}}_i] = E[W^{\text{passive}}_i], \qquad i=1,\ldots, n, $$
which proves the stated implication.

Suppose that the participants in an RS scheme decide to first fix their initial investments and only afterwards make the choice between an active or a passive administrator. Corollary 1 states that from an actuarial fairness point of view, both schemes are equivalent, provided the active administrator makes an actuarially fair initial investment.

\subsection{Risk-sharing rules with a passive administrator}
Similar to section 4.2, where we considered RS rules with an active administrator, we introduce RS rules with a passive administrator. An RS rule $\boldsymbol{P}$ with a passive administrator transforms any investment vector $\boldsymbol{\pi}$ into a relative compensation vector $\boldsymbol{P}[\boldsymbol{\pi}]$. This means that any investment vector leads to the compensation vector $\boldsymbol{W}[\boldsymbol{\pi}]$, with
\begin{equation}
W_i[\boldsymbol{\pi}]= \left( \sum_{j=1}^{n} \pi_{j} \right) \times P_i \left[ \boldsymbol{\pi}\right] + \pi_i \times P_{n+1} \left[ \boldsymbol{\pi} \right], \qquad i=1,2,\ldots, n.
\end{equation}
As before, we consider the indifference property (\ref{C61a}) for RS rules with a passive administrator.
It is then a straightforward exercise to prove that Propositions 3 and 4 can easily be adapted to hold for RS rules with a passive administrator.
This means that for RS rules with a passive administrator, which satisfy the indifference property (\ref{C61a}), actuarially fair investment vectors are only defined up to a positive constant factor.

\section{The two participants tontine fund}

In this section, we investigate an RS scheme related to a tontine fund with
only two participants.\ We consider both cases of an active and a passive
administrator.\ We investigate actuarial fairness conditions for the heterogeneous case
where survival probabilities of the two participants may be different.\ 

\subsection{The two participants tontine fund with an active administrator}

Consider two participants who set up a
tontine fund. The initial investment made by participant $i$, $i=1,2,$ is
denoted by $\pi_{i}$, while his survival probability is given by
$p_{i}=1-q_{i}$. In addition to the two participants, also a third agent,
called the active administrator, is involved. His contribution to the
investment pool is denoted by $\pi_{3}$. As before, for simplicity, we assume a
zero return over the observation period.

Suppose that the participants agree on an RS rule with an active administrator,  with relative compensation vector as described in Example 2, that is%
\begin{equation}
P_{i}=\frac{f_{i}\times I_{i}}{\sum_{j=1}^{3}f_{j}\times I_{i}},\qquad
i=1,2,3,\label{C81a}%
\end{equation}
and compensations determined from
\begin{equation}
W_{i}=\left(  \pi_{1}+\pi_{2}+\pi_{3}\right)  \times P_{i},\qquad
i=1,2,3.\label{C81}%
\end{equation}
The protection units $f_{i}$ and the indicator variables $I_{i}$ are as
defined in Example 2. In particular, we have that $I_{i}$ is equal to $1$ in
case participant $i=1,2$ survives, while it is equal to $0$ otherwise.\ Furthermore,
$I_{3}$ is equal to $\left(  1-I_{1}\right)  \times\left(  1-I_{2}\right)  $.

The relative compensation vector $\boldsymbol{P}=\left(  P_{1},P_{2}%
,P_{3}\right)  $ of the RS scheme $\left(  \boldsymbol{\pi},\boldsymbol{P}%
\right)  $ defined by (\ref{C81a}) is assumed to be independent of the initial
investments in the sense that the $f_i$'s are given real numbers, independent of $\pi_i$. This vector can be expressed as follows:
\begin{equation}
\boldsymbol{P}=\left\{
\begin{array}
[c]{cc}%
\left(  1,0,0\right)   & :\text{if }I_{1}=1\text{ and }I_{2}=0\\
\left(  0,1,0\right)   & :\text{if }I_{1}=0\text{ and }I_{2}=1\\
\left(  0,0,1\right)   & :\text{if }I_{1}=0\text{ and }I_{2}=0\\
\left(  \beta,1-\beta,0\right)   & :\text{if }I_{1}=1\text{ and }I_{2}=1
\end{array}
\right.  \label{A4}%
\end{equation}
with $\beta$ given by%
\begin{equation}
\beta=\frac{f_{1}}{f_{1}+f_{2}}.\label{C80}%
\end{equation}
If participant 1 survives and participant 2 dies, the total amount of $\left(
\pi_{1}+\pi_{2}+\pi_{3}\right)  $ is awarded to person 1 at time $1$.
Similarly, if participant 1 dies while participant 2 survives, the total
amount is awarded to person 2 at time $1$. If no participant survives, the
total proceeds $\left(  \pi_{1}+\pi_{2}+\pi_{3}\right)  $ belong to the
administrator. If both participants survive, $\left(  \pi_{1}+\pi_{2}+\pi
_{3}\right)  $ is shared by the two participants:\ The first participant takes
a relative share $\beta$ of the available funds, while the second participant
receives a relative share $\left(  1-\beta\right)  $\ of these funds.\ 

When setting up the tontine RS scheme, the two participants and the
administrator must agree on the value of the relative share $\beta$ of the
total fund that participant $1$ will receive if both participants survive, as
well as on the investments $\pi_{1}$, $\pi_{2}$ and $\pi_{3}$. Given the
relative share $\beta$, the choice of the initial investments by the
participants may reflect considerations about their survival
probabilities. Consider, for example, uniform risk-sharing, that is $\beta=\frac{1}{2}$ or,
equivalently, $f_{1}=f_{2}$, then equal investments may be considered as
`unfair' as this choice does not take into consideration that health (survival
probabilities) may be different for both participants. However, notice that
once the relative share $\beta$ and the investments are chosen, knowledge of
the survival probabilities is no longer required to be able to further manage
the tontine fund.\ 

We assume that the remaining lifetimes of the two participants are mutually
independent.\ The probabilities of the different relevant events related to
the tontine fund payouts can then be expressed as
\begin{equation}
\Pr\left[  I_{1}=i\text{ and }I_{2}=j\right]  =\left\{
\begin{array}
[c]{cc}%
p_{1}\times q_{2} & :\text{if }i=1\text{ and }j=0\\
q_{1}\times p_{2} & :\text{if }i=0\text{ and }j=1\\
q_{1}\times q_{2} & :\text{if }i=0\text{ and }j=0\\
p_{1}\times p_{2} & :\text{if }i=1\text{ and }j=1.
\end{array}
\right.  \label{A4a}%
\end{equation}

In this subsection, we want to find out what is a
reasonable choice for the investments $\pi_{1}$, $\pi_{2}$ and $\pi_{3}$, once
the relative share $\beta$ is chosen. Remark that in case `reasonability' is
translated into `actuarial fairness', the participants have to agree on a common
choice for the survival probabilities $p_{1}$ and $p_{2}$ to determine the
compensations (also called the tontine fund payouts) $W_{i}$.

Taking into account (\ref{A4}) and (\ref{A4a}), we find that the expected
relative compensations $E\left[  P_{i}\right]  $ of the participants are given
by%
\begin{equation}
E\left[  P_{1}\right]  =p_{1}\times\left(  q_{2}+\beta\times p_{2}\right)
\end{equation}
and%
\begin{equation}
E\left[  P_{2}\right]  =p_{2}\times\left(  q_{1}+\left(  1-\beta\right)
\times p_{1}\right)  ,
\end{equation}
while the administrator's expected relative compensation $E\left[
P_{3}\right]  $ is given by
\[
E\left[  P_{3}\right]  =q_{1}\times q_{2}.
\]

Let us now determine the actuarially fair investments $\pi_{1},\pi_{2}$ and $\pi
_{3}$ for the three participants.\ From (\ref{C15c}), we find that the actuarial
fairness conditions, which state that each agent's investment is equal to his
expected compensation, are given by
\begin{equation}
\left\{
\begin{array}
[c]{lll}%
\pi_{1} & = & \left(  \pi_{1}+\pi_{2}+\pi_{3}\right)  \times p_{1}%
\times\left(  q_{2}+\beta\times p_{2}\right) \\
\pi_{2} & = & \left(  \pi_{1}+\pi_{2}+\pi_{3}\right)  \times p_{2}%
\times\left(  q_{1}+\left(  1-\beta\right)  \times p_{1}\right) \\
\pi_{3} & = & \left(  \pi_{1}+\pi_{2}+\pi_{3}\right)  \times q_{1}\times q_{2}.%
\end{array}
\right.  \label{C82}%
\end{equation}
\noindent\noindent\noindent\ Any solution $\left(  \pi_{1},\pi_{2},\pi
_{3}\right)  $ of (\ref{C82}) is a set of actuarially fair initial
investments.\ Determining such a set of investments requires a common choice
for the survival probabilities of the two participants.\ Important to note is that
even if a `wrong' choice (for instance a too conservative choice) is made,
there will not be an issue of insolvency, due to the full allocation condition
(\ref{C1}), which guarantees that the sum of all tontine fund payouts
(compensations) is exactly equal to the available funds.\ 

Taking into account (\ref{C15b}) and the fact that $\Pr\left[  P_{3}=0\right]
=1-q_{1}\times q_{2}$, we can rewrite the actuarial fairness conditions
(\ref{C82}) as follows:
\begin{equation}
\left\{
\begin{array}
[c]{lll}%
\pi_{1} & = & \left(  \pi_{1}+\pi_{2}\right)  \times p_{1}\times\frac
{q_{2}+\beta\times p_{2}}{1-q_{1}\times q_{2}}\\
\pi_{2} & = & \left(  \pi_{1}+\pi_{2}\right)  \times p_{2}\times\frac
{q_{1}+\left(  1-\beta\right)  \times p_{1}}{1-q_{1}\times q_{2}}\\
\pi_{3} & = & \left(  \pi_{1}+\pi_{2}\right)  \times\frac{q_{1}\times q_{2}%
}{1-q_{1}\times q_{2}}%
\end{array}
\right.  \label{C83}%
\end{equation}

From (\ref{C15a}), we find that the actuarially fairness conditions
can also be expressed as follows:%
\begin{equation}
\left\{
\begin{array}
[c]{l}%
\pi_{1}=\pi_{3}\times p_{1}\times\frac{q_{2}+\beta\times p_{2}}{q_{1}\times
q_{2}}\\
\pi_{2}=\pi_{3}\times p_{2}\times\frac{q_{1}+\left(  1-\beta\right)  \times
p_{1}}{q_{1}\times q_{2}}%
\end{array}
\right.  \label{C84}%
\end{equation}

The formulas (\ref{C82}), (\ref{C83}) and (\ref{C84}) provide 3 different ways
to determine actuarially fair initial investments, once there is an agreement on
the relative share $\beta$ and the survival probabilities $p_{i}$.\ One can
first choose the aggregate investments $\left(  \pi_{1}+\pi_{2}+\pi
_{3}\right)  $ of all participants, or the total investments $\left(  \pi_{1}%
+\pi_{2}\right)  $ of all participants, or the investment $\pi_{3}$ of the
administrator, and then use either (\ref{C82}), (\ref{C83}) or (\ref{C84}) to
derive the actuarially fair investments for the 3 participants.$\ $

\begin{example}
\textbf{(A game of chance with a coin and a die)}\ Consider the combined game
of chance with two participants, described in the online appendix of Dhaene and
Milvesky (2024). To enter the game, participant 1 pays an amount $\pi_{1}$,
while participant 2 pays $\pi_{2}$. Participant 1 tosses a two-sided coin,
while participant 2 rolls a six-sided die. In this game, participant 1 is
successful if he tosses heads, while participant 2 is successful if he rolls a
1. In addition to the two participants, an active administrator is involved.
He contributes to the prize pool by paying an amount $\pi_{3}$. \newline The
payouts for this game of chance are defined as follows:\ If the coin lands on
heads and the die does not land on 1, the total amount of $\left(  \pi_{1}%
+\pi_{2}+\pi_{3}\right)  $ is awarded to participant 1. Similarly,
if the coin does not land on heads but the die lands on 1, the total amount of
$\left(  \pi_{1}+\pi_{2}+\pi_{3}\right)  $ is awarded to participant 2. If both participants are successful (i.e., heads and 1 appear after the
respective throws), the total proceeds of $\left(  \pi_{1}+\pi_{2}+\pi
_{3}\right)  $ are shared by person 1 and person 2.\ In this case, participant
1 receives a relative share $\beta$, while participant 2 receives a relative
share $\left(  1-\beta\right)  $ of the available fund. Finally, if both
participants are not successful (i.e., neither heads nor 1 appear after their
respective throws), the total proceeds of $\left(  \pi_{1}+\pi_{2}+\pi
_{3}\right)  $ go to the administrator.\ Let us assume that the outcomes of the
coin and the die are independent.\ The probability that the administrator
will receive the entire prize pool is then given by $\frac{5}{12}$, so it
seems reasonable to require the administrator to contribute to the prize pool
for his chance of winning.


This game of chance can be described
within the framework of the two participants tontine fund considered above in
this subsection:%
\[
W_{i}=\left(  \pi_{1}+\pi_{2}+\pi_{3}\right)  \times P_{i}\qquad i=1,2,3,
\]
with the relative compensations $P_{i}$ defined by (\ref{A4}).\newline From
(\ref{C82}), we find that the game of chance is actuarially fair for the 3
participants if and only if the amounts $\pi_{1},\pi_{2}$ and $\pi_{3}$ satisfy the
following set of equations:
\begin{equation}
\left\{
\begin{array}
[c]{lll}%
\pi_{1} & = & \left(  \pi_{1}+\pi_{2}+\pi_{3}\right)  \times\frac{5+\beta}%
{12}\\
\pi_{2} & = & \left(  \pi_{1}+\pi_{2}+\pi_{3}\right)  \times\frac{2-\beta}%
{12}\\
\pi_{3} & = & \left(  \pi_{1}+\pi_{2}+\pi_{3}\right)  \times\frac{5}{12}%
\end{array}
\right.  \label{C95a}%
\end{equation}
In this case, the actuarially fair initial investments $\pi_{1},\pi_{2}$ and
$\pi_{3}\ $are expressed as proportions of the total payment $\left(  \pi
_{1}+\pi_{2}+\pi_{3}\right)  $ of the three participants.\ One can choose the
magnitude of $\left(  \pi_{1}+\pi_{2}+\pi_{3}\right)  $ first, and then
determine the corresponding actuarially fair initial payments of the
participants and the administrator by (\ref{C95a}).\ \newline From
(\ref{C83}), we find that the actuarially fairness conditions can also be
expressed in the following way:
\begin{equation}
\left\{
\begin{array}
[c]{lll}%
\pi_{1} & = & \left(  \pi_{1}+\pi_{2}\right)  \times\frac{5+\beta}{7}\\
\pi_{2} & = & \left(  \pi_{1}+\pi_{2}\right)  \times\frac{2-\beta}{7}\\
\pi_{3} & = & \left(  \pi_{1}+\pi_{2}\right)  \times\frac{5}{7}%
\end{array}
\right.  \label{C95}%
\end{equation}
In this case, the actuarially fair investments $\pi_{1},\pi_{2}$ and $\pi_{3}\ $are
expressed as proportions of the total payment $\left(  \pi_{1}+\pi_{2}\right)
$ of the two participants.\ One can choose the magnitude of $\left(  \pi
_{1}+\pi_{2}\right)  $ first , and then determine the corresponding initial
payments of the participants and the administrator by (\ref{C95}%
).\ Alternatively, one can first choose $\pi_{1}$ and $\pi_{2}$, and then
determine $\pi_{3}$ and set $\beta$ equal to $\frac{2\pi_{1}-5\pi_{2}}{\pi
_{1}+\pi_{2}}$, such that the game is actuarially fair for all participants.
However, notice that in this case, not every choice of $\pi_{1}$ and $\pi_{2}$
will lead to a relative share $\beta$ between $0$ and $1$.\newline Finally,
from (\ref{C84}) one finds that the actuarially fair initial investments also follow from%
\begin{equation}
\left\{
\begin{array}
[c]{c}%
\pi_{1}=\pi_{3}\times\frac{5+\beta}{5}\\
\pi_{2}=\pi_{3}\times\frac{2-\beta}{5}%
\end{array}
\right.  \label{C96}%
\end{equation}
Here, $\pi_{1}$ and $\pi_{2}$ are expressed as proportions of the investment
$\pi_{3}$ of the administrator.\ One can choose the magnitude of $\pi_{3}$
first, and then determine the investments of the two participants by
(\ref{C96}).\ \hfill$\lhd$
\label{coinanddie}
\end{example}

\subsection{The two participants tontine fund with a passive administrator}

Let us now replace the tontine RS scheme with an active administrator
defined in the previous subsection by the tontine RS scheme $\left(
\boldsymbol{\pi},\boldsymbol{P}\right)  $ with a passive administrator, as
explained before.\ The investment vector and the compensation  vector are now
given by $\boldsymbol{\pi}=\left(  \pi_{1},\pi_{2},0\right)  $ and
$\boldsymbol{W}=\left(  W_{1},W_{2},0\right)  $. From (\ref{C1a}), it follows
that the compensations $W_{i}$ of the participants are given by

\begin{equation}
W_{i}= 
\left(  \pi_{1}+\pi_{2} \right)  \times P_{i}+\pi_{i}\times P_{3},\qquad i=1,2,
\end{equation}
with the relative compensation vector $\boldsymbol{P}=\left( P_{1},P_{2},P_{3}\right)  $ given by (\ref{A4}).

From (\ref{C45d}), we find that the actuarial fairness conditions for the two
participants are given by
\begin{equation}
\left\{
\begin{array}
[c]{lll}%
\pi_{1} & = & \left(  \pi_{1}+\pi_{2}\right)  \times\frac{p_{1}\times\left(
q_{2}+\beta\times p_{2}\right)  }{1-q_{1}\times q_{2}}\\
\pi_{2} & = & \left(  \pi_{1}+\pi_{2}\right)  \times\frac{p_{2}\times\left(
q_{1}+\left(  1-\beta\right)  \times p_{1}\right)  }{1-q_{1}\times q_{2}}
\end{array}
\right.  \label{C97}%
\end{equation}
Expressions (\ref{C97}) provide a way to determine actuarially fair initial
investments for the two participants tontine fund with a passive
administrator.\ Indeed, once there is an agreement on the relative share
$\beta$ and the survival probabilities $p_{i}$, one can first choose the
aggregate initial payments $\left(  \pi_{1}+\pi_{2}\right)  $ of the two
participants, and then use (\ref{C97}) to derive actuarially fair investments
$\pi_{i}$. Another possibility consists of first choosing $\pi_{1}$, and then
determining $\pi_{2}$ from (\ref{C97}):%
\begin{equation}
\pi_{2}=\pi_{1}\times\frac{p_{2}\times\left(  q_{1}+\left(  1-\beta\right)
\times p_{1}\right)  }{p_{1}\times\left(  q_{2}+\beta\times p_{2}\right)  }.
\end{equation}

\begin{example}
\textbf{(A game of chance with a coin and a die)}\ Let us revisit the `coin
and die' game of chance with active administrator considered in Example
\ref{coinanddie}.\ We replace this game of chance with one with a passive administrator and
denote it by $\left(  \boldsymbol{\pi},\boldsymbol{P}\right)  $. From
(\ref{def6}), we have that the payouts (compensations) of the transformed RS
scheme are given by (\ref{C1a}) with the relative compensation vector
$\mathbf{P}$ defined by (\ref{A4}).\newline From (\ref{C97}), we find that
this game of chance is actuarially fair for all its participants if and only if the
initial investments $\pi_{1}$ and $\pi_{2}$ follow from the following set of
equations:%
\begin{equation}
\left\{
\begin{array}
[c]{lll}%
\pi_{1} & = & \left(  \pi_{1}+\pi_{2}\right)  \times\frac{5+\beta}{7}\\
\pi_{2} & = & \left(  \pi_{1}+\pi_{2}\right)  \times\frac{2-\beta}{7}.
\end{array}
\right.  \label{C95f}%
\end{equation}
This set of equations expresses the actuarially fair payments $\pi_{1}$ and
$\pi_{2}\ $as a proportion of the total investment $\left(  \pi_{1}+\pi
_{2}\right)  $ of the two participants.\ One can choose the magnitude of
$\left(  \pi_{1}+\pi_{2}\right)  $ first and then determine the investment
efforts of the participants by (\ref{C95f}).\ Alternatively, one can also
first choose $\pi_{1}$ and $\pi_{2}$, and then set $\beta$ equal to
$\frac{2\pi_{1}-5\pi_{2}}{\pi_{1}+\pi_{2}}$, so that\ the game is actuarial
fair.\ However, notice that in this case, not every choice of $\pi_{1}$ and
$\pi_{2}$ will lead to a relative share $\beta$ between $0$ and $1$%
.\hfill$\lhd$
\end{example}

\section{Comparison of utilities}
In this section, using the setup in Example \ref{example:continuous} and Example \ref{example:gamma_dirichlet_act_fair}, we compare the mean-variance utility for three cases: self-insurance, classical-insurance and compensation-based risk-sharing. \\
Consider the mean--variance utility under risk-aversion coefficient $\gamma_i > 0$, defined by $U_i(V_i) := E[V_i] - \tfrac{\gamma_i}{2}\cdot \text{Var}[V_i]$, where $V_i$ is the net position at time 1 (excluding deterministic initial wealth, which cancels in all comparisons). The mean-variance utility provides a simple basis for comparison, as it serves as a second-order approximation to any smooth, increasing, concave utility function.\\
\smallskip
\noindent\textbf{(SI) Self-insurance.} Participant $i$ bears $X_i = I_i Y_i$ in full: $V_i^{SI} = -X_i$, so
\begin{equation}
U_i(V_i^{SI}) = -E[X_i] - \tfrac{\gamma_i}{2} \cdot \text{Var}[X_i] = -p_i \alpha_i \theta - \tfrac{\gamma_i}{2} p_i \alpha_i \theta^2(1 + q_i \alpha_i), \label{eq:USI}
\end{equation}
since $ E[X_i] = p_i \alpha_i \theta $ and $ \text{Var}[X_i] = p_i \alpha_i \theta^2(1 + q_i \alpha_i) $. \\ \\
\smallskip
\noindent\textbf{(C) Classical insurance with premium loading $\eta \geq 0$.} The insurer charges the individual $\pi_i^C(\eta) := (1 + \eta)E[X_i]$ in exchange for full indemnification. The net position $V_i^C = -\pi_i^C(\eta)$ is deterministic, so
\begin{equation}
U_i(V_i^{C}) = -(1 + \eta) E[X_i] = -(1 + \eta) \cdot p_i \alpha_i \theta. \label{eq:UC}
\end{equation} \\ \\
\smallskip
\noindent\textbf{(RS) Compensation-based risk-sharing.} Under actuarial fairness $\pi_i = E[W_i]$, the deterministic part of $V_i^{RS} = -\pi_i + W_i - X_i$ collapses, giving
\begin{equation}
U_i(V_i^{RS}) = -E[X_i] - \tfrac{\gamma_i}{2} \cdot \text{Var}[ W_i - X_i] = -p_i \alpha_i \theta - \tfrac{\gamma_i}{2} \cdot \text{Var}[ W_i - X_i], \label{eq:URS}
\end{equation} \\
The three utilities lie on a common scale, $-E[X_i] + \text{correction}$, where the correction is $-\tfrac{\gamma_i}{2} \text{Var}[X_i]$ for self-insurance, $-\eta E[X_i]$ for loaded classical insurance, and $-\tfrac{\gamma_i}{2} \text{Var}[W_i - X_i]$ for fair compensation-based risk-sharing. Setting the correction for the classical insurance case equal to the compensation-based risk-sharing correction gives us the premium loading at which classical insurance just matches compensation-based risk-sharing. Similarly setting it equal to the self-insurance correction gives the premium loading at which classical insurance just matches self-insurance. We obtain
\begin{equation}\label{eq:thresholds}
    \eta_i^{*} := \frac{\gamma_i \text{Var}[W_i-X_i]}{2 E[X_i]}, \qquad
    \eta_i^{**} := \frac{\gamma_i\,\text{Var}[X_i]}{2E[X_i]}.
\end{equation}
$\eta_i^{*}$ is the premium loading the insurer must charge above pure premium for the participant to find loaded classical insurance exactly as attractive as compensation-based risk-sharing. $\eta_i^{**}$ is the analogous threshold for self-insurance. \\ \\
These two thresholds compare classical insurance to each of the other two schemes, but they leave one comparison: CBRS versus self-insurance. That comparison does not depend on $\eta$ at all, since neither scheme involves a loading, and it is governed by a single structural quantity: the quality of the hedge provided by compensation-based risk-sharing against the underlying risk $X_i$. This is measured by the slope $\beta_i := \text{Cov}[W_i, X_i]/\text{Var}[W_i]$. As we will see, the same quantity also controls whether $\eta_i^{*} \le \eta_i^{**}$, that is, whether the loading admits any region at all in which compensation-based risk-sharing strictly dominates both alternatives. \\
\begin{theorem}\label{thm:preferred}
Under actuarial fairness $\pi_i = \mathbb{E}[W_i]$, for $i = 1, 2,\ldots, n+1,$:
\begin{enumerate}[label=\textup{(\roman*)}]
\item $U_i(V_i^{RS}) \geq U_i(V_i^{C}) \iff \eta \geq \eta_i^{*}$, with equality iff $\eta = \eta_i^{*}$.
\item $U_i(V_i^C) \geq U_i(V_i^{SI}) \iff \eta \leq \eta_i^{**}$.
\item $U_i(V_i^{RS}) \geq U_i(V_i^{SI}) \iff \beta_i \geq 1/2$.
\item $\eta_i^{*} \le \eta_i^{**}$, if and only if $\beta_i \ge 1/2$. When this holds and $\eta \in (\eta_i^{*}, \eta_i^{**})$, compensation-based risk-sharing is strictly preferred to both alternatives:
\begin{equation}\label{eq:chain}
U^{\mathrm{RS}}_i \;>\; U^{\mathrm{C}}_i(\eta) \;>\; U^{\mathrm{SI}}_i.
\end{equation}
\end{enumerate}
\end{theorem}
\begin{proof}
Parts (i) and (ii) follow by direct subtraction: $U_i(V_i^{RS}) - U_i(V_i^{C}) = \eta E[X_i] - \tfrac{\gamma_i}{2}\text{Var}[W_i-X_i]$. This quantity is non-negative if and only if $\eta \ge \eta_i^{*}$. Similarly, $U_i(V_i^{C}) - U_i(V_i^{SI}) = \tfrac{\gamma_i}{2}\text{Var}[X_i] - \eta E[X_i]$. This quantity is non-negative if and only if $\eta \le \eta_i^{**}$.\\ \\
For (iii), subtract (\ref{eq:USI}) from (\ref{eq:URS}) and expand $\text{Var}[W_i-X_i]$:
\begin{equation}\label{eq:var-gap}
\text{Var}[X_i] - \text{Var}[W_i-X_i] = (2\beta_i - 1) \text{Var}[W_i],
\end{equation}
so $U_i(V_i^{RS}) \ge U_i(V_i^{SI})$ if and only if this difference is non-negative, that is, $\beta_i \ge 1/2$. \\ \\
For (iv), the same identity \eqref{eq:var-gap} gives
\begin{equation*}
\eta_i^{**} - \eta_i^{*} = \frac{\gamma_i\bigl(\text{Var}[X_i] - \text{Var}[W_i-X_i]\bigr)}{2 E[X_i]} = \frac{\gamma_i (2\beta_i - 1) \text{Var}[W_i]}{2E[X_i]},
\end{equation*}
which has the same sign as $2\beta_i - 1$. When $\beta_i \ge 1/2$ and $\eta \in (\eta_i^{*}, \eta_i^{**})$, parts (i) and (ii) give the strict chain~\eqref{eq:chain}.
\end{proof}\\ \\
Theorem \ref{thm:preferred} characterizes the participant's preference structure for a fixed pool of size $n$. The theorem states that when the compensation $W_i$ is a sufficiently good hedge for the individual's loss $X_i$, occurring when $\beta_i \geq 0.5$, compensation-based risk-sharing is preferred over self-insurance. Moreover, with $\beta_i \geq 0.5$, there exists a range of $(\eta_i^{*},\eta_i^{**})$ for $\eta$  under which compensation-based risk-sharing is preferred over classical insurance also. Conversely, when $\beta <0.5$, the hedge is too weak: the residual variance $\text{Var}[W_i-X_i]$ exceeds $ \text{Var}[X_i] $, so compensation-based risk-sharing leaves the participant worse off. In short, the desirability of compensation-based risk-sharing reduces to a single condition on the hedge quality $\beta_i$.

\section{The homogeneous tontine fund}

\subsection{The homogenous tontine fund with an active administrator}

In this subsection, we investigate a special case of the tontine RS scheme
$\left(  \boldsymbol{\pi},\boldsymbol{P}\right)  $ with an active
administrator, described in Example 2.\  We consider $n$
participants and an active administrator.\ We assume that all protection
units $f_{i}$ are equal to $1$.\ The $n+1$ agents each invest an initial
amount $\pi_{i}$. The components of the relative compensation vector are given
by
\begin{equation}
P_{i}=\frac{I_{i}}{\sum_{j=1}^{n+1}I_{j}},\qquad i=1,2,\ldots,n+1.\label{C98}%
\end{equation}
The compensations $W_{i}$ of the $n+1$ agents follow then from (\ref{C1aa}%
).\ We consider a homogeneous tontine RS scheme, which means that we assume
that the indicator variables $I_{i}$ of the $n$ participants are i.i.d, with
$\Pr\left[  I_{i}=1\right]  =p=1-q$.

Due to symmetry reasons, we must have that $E\left[  P_{i}\right]  $ is equal
for all $n$ participants.\ Furthermore, we have that
\begin{equation}
E\left[  P_{n+1}\right]  =\Pr\left[  I_{n+1}=1\right]  =q^{n}.\label{C99a}%
\end{equation}
Taking into account that $\sum_{i=1}^{n+1}E\left[  P_{i}\right]  =1$ leads to
\begin{equation}
E\left[  P_{i}\right] = E\left[\frac{I_{i}}{\sum_{j=1}^{n+1}I_{j}}\right] =\frac{1-q^{n}}{n},\qquad i=1,2,\ldots,n.\label{C99}%
\end{equation}

\begin{remark}
    In Example \ref{example:gamma_dirichlet_act_fair}, for the homogeneous case, that is $p_i = p$ and $\alpha_i=\alpha$ for all $i=1,2,\ldots,n$, we have the same result as in (\ref{C99a}) and (\ref{C99}).
\end{remark}

According to (\ref{C15b}), the homogeneous tontine RS scheme $\left(  \boldsymbol{\pi
},\boldsymbol{P}\right)  $ is actuarially fair for all agents if and only if
\begin{equation} \label{C95b}
    \pi_{i}=\frac{1}{n}   \sum_{j=1}^{n}\pi_{j} ,\qquad i=1,2,\ldots,n
\end{equation}
and
\begin{equation} \label{C95c}
    \pi_{n+1} = \frac{q^n}{1-q^n}  \sum_{j=1}^{n} \pi_j  .
\end{equation}
This implies that all initial investments $\pi_{i}$ of the
participants are equal, which we denote hereafter by $\pi$. 
This result was to be expected because of the inherent symmetry of the problem. Actuarially fair initial investments can then be determined by first choosing the administrator's initial investment $\pi_{n+1}$ and then determining the participants' initial investments by (\ref{C95b}) and (\ref{C95c}).

\subsection{The homogeneous tontine fund with a passive administrator}

Let us now replace the homogeneous tontine RS scheme with an active
administrator, considered in Section 8.1, by the corresponding homogeneous
tontine RS scheme $\left(  \boldsymbol{\pi},\boldsymbol{P}\right)  $ with a
passive administrator.\ The compensations $W_{i}$ are then given by
(\ref{C1a}), with the relative compensations $P_{i}$ as defined in
(\ref{C98}).\newline From (\ref{C45d}), (\ref{C99a}) and (\ref{C99}), we find
that $\left(  \boldsymbol{\pi},\boldsymbol{P}\right)  $ is actuarially fair for
its $n$ participants if and only if all initial investments $\pi_{i}$ of the
participants are all equal:%
\[
\pi_{i}=\frac{1}{n}\sum_{j=1}^{n}\pi_{j},\qquad i=1,2,\ldots,n.
\] 
Again, this result was to be expected because of the inherent symmetry of the problem.\ 

\subsection{Comparing the homogeneous tontine fund with an active administrator and a centralized insurance approach}

Consider a group of $n$ persons (participants). The one-year survival of
person $i$ is described by the Bernoulli random variable $I_{i}$, which equals $1$ if he
survives until time $1$, while it equals $0$ otherwise. The group
is assumed to be homogeneous in the sense that all Bernoulli random variables are i.i.d. with a common survival probability $p$. Suppose that each person is willing to invest an initial amount $\pi$ which will entitle him to a survival benefit at time $1$.

The participants could opt for \textit{centralized insurance}, by each buying
a pure endowment with a premium $\pi$. Suppose that the premiums are pure
premiums. In case person $i$ chooses an insurance with pure premium $\pi$,
then, assuming a zero-discount rate, his payout $W_{i}^{c}$ at time $1$ is
given by
\begin{equation}
W_{i}^{c}=\frac{\pi}{p}\times I_{i},\label{C86}%
\end{equation}
where we use the superscript `$c$' to indicate that this is the payout in a
`centralized' approach. The expected payoff in case of buying insurance is
given by
\[
E\left[  W_{i}^{c}\right]  =\pi.
\]
On average, the premium is equal to the expected insurance payment.\ This
means that the insurance approach is actuarially fair for each person.

Instead of buying a pure endowment insurance, the participants could decide to
take part in a homogeneous \textit{RS scheme} $\left(  \boldsymbol{\pi
},\boldsymbol{P}\right)  $ with an active administrator.\ Any participant
makes an initial investment $\pi$ in the tontine fund.\ The active
administrator's investment is, as usual, denoted by $\pi_{n+1}$.\ The initial investment vector is given by $\boldsymbol{\pi}=\left(
\pi,\pi,\ldots,\pi,\pi_{n+1}\right)  $.\ Suppose that the relative compensation vector $\boldsymbol{P}$ of the homogeneous RS scheme is
given by (\ref{C98}).\ The compensation $W_{i}(n)$ that participant $i$ will receive at time $1$ in a fund with $n$ participants in a decentralized approach with an active administrator is then given by
\begin{equation}
W_{i}(n)=\left(  n\times\pi+\pi_{n+1}\right)  \times\frac{I_{i}}{\sum
_{j=1}^{n+1}I_{j}},\qquad i=1,2,\ldots,n+1,\label{C86a}%
\end{equation}
where we added `$n$' in the notation $W_i(n)$ to indicate the number of participants in the RS scheme.

Let us now assume that the RS scheme $\left(  \boldsymbol{\pi},\boldsymbol{P}%
\right)  $ is actuarially fair for all participants and for the administrator.\ Then from (\ref{C95b}) and (\ref{C95c}) we find that the compensations $W_{i}(n)$ can be written as follows: \
\begin{equation}
W_{i}(n)=\frac{n \times \pi}{1-q^{n}}\times\frac{I_{i}}{\sum_{j=1}^{n+1}I_{j}},\qquad i=1,2,\ldots,n+1.\label{C87a}%
\end{equation}


From (\ref{C87a}), one can prove that for any participant we have 
$$ W_{i}(n)=\frac{ \pi}{1-q^{n}}\times\frac{I_{i}}{\frac{1}{n}\sum_{j=1}^{n}I_{j} + \frac{1}{n}\prod_{j=1}^n (1-I_j)} \xrightarrow{a.s.} \frac{\pi}{p} \times I_i. $$
Intuitively, this means that when the number of participants in the homogeneous tontine fund with an active administrator becomes infinitely large, then with probability $1$, the compensation of each participant $i$ becomes equal to the corresponding insurance payment $W_i^c$. 

Similarly, for the contribution of the administrator we have 
$$ W_{n+1}(n) = \frac{n \times \pi }{1-q^n} \times \prod_{j=1}^n (1-I_j)  \xrightarrow{a.s.} 0. $$

\noindent This means that when the number of participants in the homogeneous tontine fund with an active administrator becomes infinitely large, then with probability $1$, the contribution of the administrator becomes equal to $0$. We can conclude that the homogeneous tontine fund approach with an active administrator converges to the central insurance approach when the number of participants goes to infinity.

\subsection{Comparing the homogeneous tontine fund with a passive administrator and a centralized insurance approach}
Consider again the group of $n$ persons as described in Section 8.3. Suppose that the participants decide to take part in a homogeneous \textit{RS scheme} $\left(  \boldsymbol{\pi},\boldsymbol{P}\right)  $ with a passive administrator. Assuming again actuarially fair (and hence equal) contributions, any participant makes an initial investment, denoted by $\pi$, in the tontine fund. The initial investment vector is then given by $\boldsymbol{\pi}=\left(
\pi,\pi,\ldots,\pi,0\right)  $. The compensation that participant $i$ will receive at time $1$ in a fund with $n$ participants in a decentralized approach with a passive administrator then follows from
\begin{equation}
W_{i}(n)=n\times\pi  \times\frac{I_{i}}{\sum
_{j=1}^{n+1}I_{j}} + \pi \times I_{n+1} ,\quad i=1,2,\ldots,n.\label{C102}%
\end{equation}

From (\ref{C102}), for any participant $i$ one has that 

$$ W_i(n) = \pi  \times\frac{I_{i}}{\frac{1}{n}\sum
_{j=1}^{n}I_{j} + \frac{1}{n} \prod_{j=1}^n (1-I_j)} + \pi \times \prod_{j=1}^n\left( 1 - I_j \right)  \xrightarrow{a.s.} \frac{\pi}{p} \times I_i, $$

\noindent which means that when the number of participants in the homogeneous tontine fund with a passive administrator becomes infinitely large, then with probability $1$, the contribution of participant $i$ becomes equal to the insurance payment $W_i^c$. We can conclude that the tontine fund with a passive administrator also converges to the centralized insurance approach.

\section{Conclusion}

In this paper, we considered compensation-based RS schemes $\left(
\boldsymbol{\pi},\boldsymbol{P}\right)  $ with an active administrator.\ At
time $0$, each participant $i$ invests an initial amount $\pi_{i}$ in the
fund, while the administrator invests the amount $\pi_{n+1}$ in the same
fund.\ These initial investments are summarized in the investment vector
$\boldsymbol{\pi}=\left(  \pi_{1},\pi_{2},\ldots,\pi_{n+1}\right)  $. At time $1$,
each participant and also the administrator receives compensation from the
fund, summarized in the compensation vector $\boldsymbol{W}=\left(
W_{1},W_{2},\ldots,W_{n+1}\right)  $, which is given by
\[
W_{i}=\left(  \sum_{j=1}^{n+1}\pi_{j}\right)  \times P_{i}\qquad
i=1,2,\ldots,n+1,
\]
where $\boldsymbol{P}=\left(  P_{1},P_{2},\ldots,P_{n+1}\right)  $ is the
relative compensation vector of the RS scheme under consideration. This setup
is a generalization of the setup in Dhaene and Milevsky (2024) and Denuit and
Robert (2025), who consider RS schemes with relative compensation vector given by
(\ref{C5}) as described in Example 4 on tontine funds. 

Apart from the case of an active administrator, we also considered RS schemes
$\left(  \boldsymbol{\pi},\boldsymbol{P}\right)  $ with a passive
administrator. In this case, the investment vector is given by $\boldsymbol{\pi
}=\left(  \pi_{1},\pi_{2},\ldots,\pi_{n},0\right)  $, while the relative compensation
vector is given by $\boldsymbol{W}=\left(  W_{1},W_{2},\ldots,W_{n},0\right)
$, with
\begin{equation}
W_{i}=
\left(  \sum_{j=1}^{n}\pi_{j}\right)  \times P_{i}+\pi_{i}\times
P_{n+1}\qquad i=1,2,\ldots,n
\end{equation}
where $\boldsymbol{P}=\left(  P_{1},P_{2},\ldots,P_{n+1}\right)  $ is again
the relative compensation vector of the RS scheme under consideration. This setup is a
generalization of the setup in Denuit and Robert (2025), who consider RS
schemes with relative compensation vector given by (\ref{C5}) as described in Example 4 on tontine funds.

For both types of compensation-based RS schemes, we derived actuarial fairness
conditions, that is, conditions under which the initial investment of each
agent is equal to the expected compensation he will receive at time $1$.\ 

We considered two particular cases of tontine funds in some detail. First,
the two participant tontine fund was investigated. In this case, two participants
decide to set up a tontine fund and share the terminal fund value among the
survivors. We also considered the homogeneous tontine fund, where all
participants have i.i.d. survival indicator variables and purchase equal numbers of protection units. Finally, we also considered the relation between the homogeneous tontine fund approach with passive and active administrators and the centralized insurance approach. Addressing moral hazard in compensation-based risk-sharing schemes remains an open problem. Example 1 shortly discusses one possible approach, although a more thorough investigation of this issue is left for future research.

\paragraph{Acknowledgments.}
Jan Dhaene gracefully acknowledges funding from FWO and F.R.S.-FNRS under the Excellence of Science (EOS) programme, project ASTeRISK (40007517). Atibhav Chaudhry gracefully acknowledges support by an Australian Government Research Training Program (RTP) Scholarship, a Faculty of Business and Economics Doctoral Program Scholarship, a University of Melbourne Research Scholarship, and the 2021 Kilmany Scholarship. Ka Chun Cheung is supported by a grant from the Research Grants Council of the Hong Kong Special Administrative Region, China (Project No. 17303721). Austin Riis-Due gratefully acknowledges the support of The James C. Hickman Scholars Award. The authors also thank Michel Denuit, Runhuan Feng, Zinoviy Landsman, Daniël Linders, Moshe Milevsky and Bertrand Tavin for helpful comments on earlier versions of the paper.

\paragraph{Competing interests.}
The authors have no competing interests to declare.

\section{References}
\begin{itemize}
\renewcommand\labelitemi{\rule[0.5ex]{0.5em}{0.4pt}}
    \item Abdikerimova, S., \& Feng, R. (2022). Peer-to-peer multi-risk insurance and mutual aid. European Journal of Operational Research, 299 (2), 735–749.
    \item Bernard, C., Feliciangeli, M., \& Vanduffel, S. (2025). Can an actuarially unfair tontine be optimal? The Geneva Risk and Insurance Review, 50 (1), 39–71.
    \item Canabarro, E., Finkemeier, M., Anderson, R. R., \& Bendimerad, F. (2000). Analyzing insurance‐linked securities. The Journal of Risk Finance, 1(2), 49-75.
    \item Chen, P., Cheung, K. C., Dhaene, J., \& Yam, P. (2026). Optimal tontine fund design via rank-dependent expected utility. Working paper.
    \item Cheung, K. C., \& Lo, A. (2014). Characterizing mutual exclusivity as the strongest negative multivariate dependence structure. Insurance: Mathematics and Economics, 55, 180–190.
    \item Denuit, M., \& Dhaene, J. (2012). Convex order and comonotonic conditional mean risk sharing. Insurance: Mathematics and Economics, 51 (2), 265–270.
    \item Denuit, M., Dhaene, J., Ghossoub, M., \& Robert, C. Y. (2025). Comonotonicity and pareto optimality, with application to collaborative insurance. Insurance: Mathematics and Economics, 120, 1–16.
    \item Denuit, M., Dhaene, J., \& Robert, C. Y. (2022). Risk-sharing rules and their properties, with applications to peer-to-peer insurance. Journal of Risk and Insurance, 89 (3), 615–667.
    \item Denuit, M., \& Robert, C. Y. (2023). Endowment contingency funds for mutual aid and public financing. LIDAM Discussion Papers ISBA 2023/09, Université catholique de Louvain, Institute of Statistics, Biostatistics and Actuarial Sciences (ISBA).
    \item Denuit, M., \& Robert, C. Y. (2025). Equal compensations under actuarially fair contributions in endowment contingency funds. Risk Sciences, 1, 100005.
    \item Denuit, M., \& Robert, C. Y. (2026). Fully-funded risk-sharing schemes. LIDAM Discussion Papers ISBA 2026/14, Université catholique de Louvain, Institute of Statistics, Biostatistics and Actuarial Sciences (ISBA).
    \item Devroye, L. (1986). Non-uniform random variate generation. Springer-Verlag.
    \item Dhaene, J., \& Denuit, M. (1999). The safest dependence structure among risks. Insurance: Mathematics and Economics, 25 (1), 11–21.
    \item Dhaene, J., Kazzi, R., \& Valdez, E. A. (2026). Axiomatic characterizations of some simple risk-sharing rules. Risk Sciences, 2.
    \item Dhaene, J., \& Milevsky, M. A. (2024). Egalitarian pooling and sharing of longevity risk aka can an administrator help skin the tontine cat? Insurance: Mathematics and Economics, 119, 238–250.
    \item Dhaene, J., Robert, C. Y., Cheung, K. C., \& Denuit, M. (2025). An axiomatic characterization of the quantile risk-sharing rule. Scandinavian Actuarial Journal, 1–20.
    \item Jiao, Z., Kou, S., Liu, Y., \& Wang, R. (2022). An axiomatic theory for anonymized risk sharing. arXiv preprint arXiv:2208.07533.
    \item Lauzier, J.-G., Lin, L., \& Wang, R. (2024). Negatively dependent optimal risk sharing. arXiv preprint arXiv:2401.03328.
    \item Tavin, B. (2023). Reply to request from Dhaene and Milevsky [Private communication]
\end{itemize}

\end{document}